\begin{document}
\title{A Local Search Algorithm for MaxSMT(LIA)}


\author{Xiang He\inst{1,2}\orcidID{0009-0005-0730-1388}$^{\star}$ \and Bohan Li\inst{1,2}\orcidID{0000-0003-1356-6057}\thanks{These two authors are co-first authors, as they contribute equally.}
\and Mengyu Zhao\inst{1,2}\orcidID{0009-0001-8436-3532}  \and Shaowei Cai\inst{1,2}\orcidID{0000-0003-1730-6922}\thanks{Corresponding author}}



\institute{Key Laboratory of System Software (Chinese Academy of Sciences) and\\ State Key Laboratory of Computer Science,\\Institute of Software, Chinese Academy of Sciences, Beijing, China\\
\email{\{hexiang,libh,zhaomy,caisw\}@ios.ac.cn}
\and
School of Computer Science and Technology \\
University of Chinese Academy of Sciences, Beijing, China
}

\authorrunning{Xiang He, Bohan Li, Mengyu Zhao, and Shaowei Cai}

\maketitle              
\begin{abstract}
MaxSAT modulo theories (MaxSMT) is an important generalization of Satisfiability modulo theories (SMT) with various applications.
In this paper, we focus on MaxSMT with the background theory of Linear Integer Arithmetic, denoted as MaxSMT(LIA).
We design the first local search algorithm for MaxSMT(LIA) called PairLS, based on the following novel ideas.
A novel operator called {\it pairwise} operator is proposed for integer variables.
It extends the original local search operator by simultaneously operating on two variables, enriching the search space.
Moreover, a compensation-based picking heuristic is proposed to determine and distinguish the {\it pairwise} operations.
Experiments are conducted to evaluate our algorithm on massive benchmarks.
The results show that our solver is competitive with state-of-the-art MaxSMT solvers.
Furthermore, we also apply the {\it pairwise} operation to enhance the local search algorithm of SMT, which shows its extensibility.


\keywords{MaxSMT  \and Linear Integer Arithmetics \and Local Search.}
\end{abstract}
\section{Introduction}

The maximum satisfiability problem (MaxSAT) is an optimization version of the satisfiability problem (SAT), aiming to minimize the number of falsified clauses, and it has various applications~\cite{li2021maxsat}.
A generalization of MaxSAT is the weighted Partial MaxSAT problem, where clauses are divided into hard and soft clauses with weights (positive numbers). 
The goal is to find an assignment that satisfies all hard clauses and minimizes the total weight of falsified soft clauses.
MaxSAT solvers have made substantial progress in recent years~\cite{ansotegui2013sat,martins2014open,lei2018solving,li2021combining,ignatiev2019rc2}.

However, MaxSAT has limited expressiveness, and it can be generalized from the Boolean case to Satisfiability Modulo Theories (SMT), deciding the satisfiability of a first-order logic formula with respect to certain background theories, leading to a  generalization called MaxSAT Modulo Theories (MaxSMT)~\cite{nieuwenhuis2006sat}. With its enhanced expressive power, MaxSMT has various practical applications, such as safety verification \cite{brockschmidt2015compositional}, concurrency debugging \cite{terra2019concurrency}, non-termination analysis \cite{larraz2014proving} and superoptimization \cite{albert2020synthesis}.
Compared to MaxSAT and SMT solving, the research on MaxSMT solving is still in its preliminary stage.
Cimatti et al.~\cite{cimatti2010satisfiability} introduced the concept of ``Theory of Costs'' and developed a method to manage SMT with Pseudo-Boolean (PB) constraints and minimize PB cost functions. 
Sebastiani et al.~\cite{sebastiani2012optimization,sebastiani2015optimization} proposed an approach to solve MaxSMT problem by encoding it into SMT with PB functions.
A modular approach for MaxSMT called Lemma-Lifting was proposed by Cimatti et al.~\cite{cimatti2013modular}, which involves the iterative exchange of information between a lazy SMT solver and a purely propositional MaxSAT solver. 
The implicit hitting set approach was lifted from the propositional level to SMT~\cite{fazekas2018implicit}.
Two well-known MaxSMT solvers are OptiMathSAT~\cite{sebastiani2020optimathsat} and $\nu Z$~\cite{bjorner2014nuz}, which are currently the state-of-the-art MaxSMT solvers.
In this paper, we focus on the MaxSMT problem with the background theory of Linear Integer Arithmetic (LIA), denoted as MaxSMT(LIA), which consists of arithmetic atomic formulas in the form of linear equalities or inequalities over integer variables.

We apply the local search method to solve MaxSMT(LIA).
Although local search has been successfully used to solve SAT~\cite{LiL12,BalintS12,cai2013local,cai2015ccanr,Biere17} and recently to SMT on the theory of bit-vector theory~\cite{frohlich2015stochastic,niemetz2016precise,niemetz2017propagation,niemetz2020ternary}, integer arithmetic~\cite{cai2022local,cai2023local} and real arithmetic~\cite{li2023local,li2023local2}, this is the first time that it is applied to MaxSMT.

First, we propose a novel operator for integer variables, named {\it pairwise operator}, to enrich the search space by simultaneously operating on two variables.
When the algorithm falls into the local optimum w.r.t. operations on a single variable, further exploring the neighborhood structure of {\it pairwise operator} can help it escape from the local optimum.


Moreover, a novel method based on the concept of {\it compensation} is proposed to determine the {\it pairwise operation}.
Specifically, the {\it pairwise operation} is determined as a pair of simultaneous modifications, one to satisfy a falsified clause, and the other to minimize the disruptions the first operation might wreak on the already satisfied clauses.
Then,  a two-level picking heuristic is proposed to distinguish these {\it pairwise operations}, by considering the potential of a literal becoming falsified.

Based on the above novel ideas, we design the first local search solver for MaxSMT(LIA) called PairLS, 
prioritizing hard clauses over soft clauses.
Experiments are conducted on massive benchmarks.
New instances based on SMT-LIB are generated to enrich the benchmarks for MaxSMT(LIA).
We compare our solver with 2 state-of-the-art MaxSMT(LIA) solvers, OptiMathSAT  and $\nu Z$.
Experimental results show that our solver is competitive with these state-of-the-art solvers.
We also present the evolution of solution quality over time, showing that PairLS can efficiently find a promising solution within a short cutoff time.
Ablation experiments are also conducted to confirm the effectiveness of proposed strategies.
Moreover, we  apply the {\it pairwise} operator to enhance the local search algorithm of SMT, demonstrating its extensibility.

\section{Preliminary}
\label{pre}

\subsection{MaxSMT on Linear Integer Arithmetics}
The Satisfiability modulo theories (SMT) problem determines the satisfiability of a given quantifier-free first-order formula with respect to certain background theories.
Here we consider the theory of Linear Integer Arithmetic (LIA), consisting of arithmetic formulae in the form of linear equalities or inequalities over integer variables ($\sum_{i=0}^n{a_ix_i\leq k}$ or $\sum_{i=0}^n{a_ix_i = k}$)\footnote{strict linear equalities in the form of ($\sum_{i=0}^n{a_ix_i< k}$) can be transformed to ($\sum_{i=0}^n{a_ix_i\leq k-1}$)}.
An atomic formula can be a propositional variable or an arithmetic formula.
A $literal$ is an atomic formula, or the negation of an atomic formula.
A $clause$ is the disjunction of a set of literals, and a formula in {\it conjunctive normal form (CNF)} is the conjunction of a set of clauses.
Given the sets of propositional variables and integer variables, denoted as $P$ and $X$ respectively, an assignment $\alpha$ is a mapping $X\rightarrow Z$ and $P\rightarrow \{false, true\}$, and $\alpha(x)$ denotes the value of a variable $x$ under $\alpha$. 

The (weighted partial) MaxSAT Modulo Theories problem (MaxSMT for short) is generated from SMT.
The clauses are divided into $hard$ clauses and $soft$ clauses with positive weight.

\begin{definition}
For a MaxSMT instance $F$, given the current assignment $\alpha$, if it satisfies all hard clauses, then $\alpha$ is a {\it feasible} solution, and the {\it cost} is defined as the total weight of all falsified soft clauses, denoted as $cost(\alpha)$.
\end{definition}

MaxSMT aims to find a feasible solution with minimal $cost$, that is, to find an assignment satisfying all hard clauses and minimizing the sum of the weights of the falsified soft clauses.
The MaxSMT problem with the background theory of LIA is denoted as MaxSMT(LIA).

\begin{example}
    Given a MaxSMT(LIA) formula $F=c_1\wedge c_2 \wedge c_3 \wedge c_4=(a-b\le 1\vee a-c\le0)\wedge(b-c\le -1)\wedge (a-d\le 1)\wedge(A)$, let $c_1$ and $c_4$ be hard clauses, $c_2$ and $c_3$ be soft clauses with weight 1 and 2.
    Given the current assignment $\alpha=\{a=0,b=0,c=0,d=0, A=true\}$,
    $cost(\alpha)=1$, since only $c_2$ is falsified.
\end{example}

\subsection{Local Search Components}
The {\it clause weighting scheme} is a popular local search method that associates an additional property (which is an integer number) called {\it penalty weight} to clauses and dynamically adjusts them to prevent the search from getting stuck in a local optimum.
We adopt the weighting scheme called Weighting-PMS~\cite{lei2018solving} to instruct the search. Weighting-PMS  has been applied in state-of-the-art local search solvers for MaxSAT, such as 
SATLIKE~\cite{lei2018solving} and
SATLIKE3.0~\cite{cai2020old}. 
When the algorithm falls into a local optimum, the Weighting-PMS dynamically adjusts the {\it penalty weights} of hard and soft clauses to guide the search direction.

Note that the {\it penalty weight} and the original weight of soft clauses are different. 
The goal of MaxSMT is to minimize the total original weight of unsatisfied soft clauses, while the {\it penalty weight} is updated during the search process, guiding the search in a promising direction.

Another key component of a local search algorithm is the {\it operator}, defining how to modify the current solution.
When an operator is instantiated by specifying the variable to operate and the value to assign, an {\it operation} is obtained.


\begin{definition}
The score of an operation $op$, denoted by $score(op)$, is the decrease of the total penalty weight of falsified clauses caused by applying $op$.
\end{definition}  

An operation is {\it decreasing} if its {\it score} is greater than 0.
Note that given a set of clauses, denoted as $C$, the $score$ of operation $op$ on the subformula composed of $C$ is denoted as $score_C(op)$.


\section{Review of LS-LIA}

\label{review}
As our algorithm adopts the two-mode framework of LS-LIA, which is the first local search algorithm for SMT(LIA)~\cite{cai2022local}, we briefly review it in this section. 


After the initialization, the algorithm switches between Integer mode and Boolean mode. In each mode, an operation on a variable of the corresponding data type is selected to modify the current assignment.
 The two modes switch to each other when the number of non-improving steps of the current mode reaches a threshold. 
 The threshold is set to $L\times P_b$ for the Boolean mode and $L\times P_i$ for the Integer mode, where $P_b$ and $P_i$ denote the proportion of Boolean and integer literals to all literals in falsified clauses, and $L$ is a parameter.


In the Boolean mode, the {\it flip} operator is adopted to modify a Boolean variable to the opposite of its current value. In the Integer mode as in Algorithm 1, a novel operator called {\bf critical move} ($cm$ for short) is proposed by considering the literal-level information.

\begin{definition}
The critical move operator, denoted as  $cm(x,\ell)$, assigns an integer variable $x$ to the threshold value making literal $\ell$ true, where $\ell$ is a falsified literal containing $x$.
\end{definition}

Specifically, the {\it threshold value} refers to the minimum modification to $x$ that can make $\ell$ true.
Example \ref{cm_example} is given to help readers understand the definition.
\begin{example}
\label{cm_example}
Given two falsified literals $\ell_1:(2a-b\le-3)$ and $\ell_2:(5c-d=5)$, and the current assignment is $\alpha=\{a=0,b=0,c=0,d=0\}$.
Then $cm(a,\ell_1)$, $cm(b,\ell_1)$, $cm(c,\ell_2)$, and $cm(d,\ell_2)$ refers to assigning $a$ to -2, assigning $b$ to $3$, assigning $c$ to 1 and assigning $d$ to $-5$ respectively.
\end{example}

An important property of the {\it critical move} operator is that after the execution, the corresponding literal must be true. Therefore, by picking a falsified literal and performing a $cm$ operation on it, we can make the literal true.

In our algorithm for MaxSMT(LIA), the {\it critical move} operator is also adopt-\\ed to make a falsified literal become true.

\begin{algorithm}[!t]
\caption{Integer Mode of LS-LIA}
\label{LS-Int}
\While{{\it non-improving steps} $\leq L\times P_i$}{
\lIf{all clauses are satisfied}{return }
\If{$\exists$  decreasing {\it cm} operation }{
$op:=$ such an operation with the greatest $score$}
\Else{
update penalty weights\;
$c:=$ a random falsified clause with integer variables\;
$op:=$ a {\it cm} operation in $c$ with  $score$\;
}
perform $op$ \; 
}

\end{algorithm}

\section{Pairwise Operator}
\label{pair}

In this section, we introduce a novel operator for integer variables,
denoted as {\it pairwise operator}.
It extends the original {\it critical move} operator to enrich the search space,
serving as an extended neighborhood structure.
We first introduce the motivation for the pairwise operator.
Then, based on pairwise operator, the framework of our algorithm in Integer mode is proposed.


\subsection{Motivation}

The original {\it critical move} operator only considers one single variable each time.
However, it may miss potential decreasing operations.
Specifically, when there exists no decreasing {\it critical move} operation,
operations that simultaneously modify two variables may be decreasing, which are not considered by {\it critical move}.

\begin{example}
\label{pair_example}
Given a formula $F=c_1\wedge c_2\wedge c_3=(a-b\le-2)\wedge(b-c\le1)\wedge(c-a\le1)$ where the penalty weight of each clause is 1.
and the current assignment is $\alpha=\{a=0,b=0,c=0\}$.
There exist two {\it critical move} operations: $cm(a,a-b\le-2)$ and $cm(b,a-b\le-2)$, referring to assigning $a$ to $-2$ and $b$ to $2$, respectively.
Both operations are not decreasing, since these two operations will respectively falsify $c_3$ and $c_2$.
However, simultaneously assigning $b$ to $2$ and $c$ to $1$ can be decreasing, since after the operation, all clauses become satisfied.
\end{example}

Thus, the {\it pairwise operator} simultaneously modifying two variables is proposed to find a decreasing operation when there is no decreasing $cm$ operation.
\begin{definition}
{\it Pairwise operator}, denoted as $p(v_1,v_2,val_1,val_2)$, will simultaneously modify $v_1$ to $val_1$ and $v_2$ to $val_2$ respectively, where $v_1$ and $v_2$ are integer variables, and $val_1$ and $val_2$ are integer parameters.
\end{definition}

The {\it pairwise operator} can be regarded as an extended neighborhood.
When there exists no decreasing {\it critical move} operation, indicating that the local optimum of modifying individual variables is found, the search space can be expanded by simultaneously modifying two variables, and the solution may be further improved, thanks to the following property:

\begin{proposition}
\label{pro_pair}
Given a {\it pairwise operation} $op_1=p(v_1,v_2,val_1,val_2)$, and two operations individually assigning $v_1$ to $val_1$ and $v_2$ to $val_2$, denoted as $op_2$ and $op_3$ respectively.
$op_1$ is decreasing while neither $op_2$ nor $op_3$ is decreasing,
only if there exists a clause $c$ containing both $v_1$ and $v_2$, and on clause $c$, $score_{\{c\}}(op_1)>score_{\{c\}}(op_2)+score_{\{c\}}(op_3)$.
\end{proposition}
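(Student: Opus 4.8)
The plan is to prove the statement by a direct contradiction argument resting on the additivity of the score over clauses. The first step is to record that for any operation $op$ the global score decomposes as a sum of per-clause scores, $score(op)=\sum_{c} score_{\{c\}}(op)$, the sum ranging over all clauses of $F$. This is immediate from the definition of $score$: the total penalty weight of falsified clauses is the sum of the individual clause weights over the falsified clauses, so its decrease under $op$ is the sum of the per-clause decreases $score_{\{c\}}(op)$.

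Next I would classify each clause $c$ by which of $v_1,v_2$ it mentions and show that only clauses containing \emph{both} variables can break the additive identity between $op_1$ and the pair $op_2,op_3$. If $c$ mentions neither variable, then none of the three operations changes the truth value of $c$, so $score_{\{c\}}(op_1)=score_{\{c\}}(op_2)=score_{\{c\}}(op_3)=0$. If $c$ mentions $v_1$ but not $v_2$, then on $c$ the pairwise move $op_1$ has exactly the same effect as the single-variable move $op_2$, because a clause is insensitive to a change in a variable it does not contain; meanwhile $op_3$ (which only touches $v_2$) does not change $c$ at all. Hence $score_{\{c\}}(op_1)=score_{\{c\}}(op_2)$ and $score_{\{c\}}(op_3)=0$, giving $score_{\{c\}}(op_1)=score_{\{c\}}(op_2)+score_{\{c\}}(op_3)$. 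The case where $c$ contains $v_2$ but not $v_1$ is symmetric. Thus for every clause not containing both $v_1$ and $v_2$ the identity $score_{\{c\}}(op_1)=score_{\{c\}}(op_2)+score_{\{c\}}(op_3)$ holds exactly.

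I would then conclude by contradiction. Assume $op_1$ is decreasing ($score(op_1)>0$) while neither $op_2$ nor $op_3$ is decreasing ($score(op_2)\le 0$ and $score(op_3)\le 0$), and suppose toward a contradiction that every clause $c$ containing both $v_1$ and $v_2$ satisfies $score_{\{c\}}(op_1)\le score_{\{c\}}(op_2)+score_{\{c\}}(op_3)$. Combining this with the exact identity established for all remaining clauses, the inequality $score_{\{c\}}(op_1)\le score_{\{c\}}(op_2)+score_{\{c\}}(op_3)$ would hold for every clause $c$. Summing over all clauses and invoking additivity gives $score(op_1)\le score(op_2)+score(op_3)\le 0$, contradicting $score(op_1)>0$. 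Therefore some clause $c$ containing both variables must satisfy the strict inequality $score_{\{c\}}(op_1)>score_{\{c\}}(op_2)+score_{\{c\}}(op_3)$, which is exactly the claimed condition.

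There is no deep obstacle in this argument; the only point deserving care is the claim, used in the second paragraph, that on a clause mentioning just one of the two variables the pairwise operation acts identically to the corresponding single-variable operation. This rests on the simple but essential observation that the truth value of a clause depends only on the variables actually occurring in its literals, so modifying a variable absent from $c$ is invisible to $c$. Once this is pinned down, the whole proposition reduces to bookkeeping with the additive decomposition of the score.
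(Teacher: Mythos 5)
Your proof is correct and follows essentially the same route as the paper's: both rest on the additivity of $score$ over clauses, the observation that a clause containing at most one of $v_1,v_2$ sees the pairwise operation exactly as the corresponding single-variable operation (so exact additivity holds there), and a final contradiction obtained by summing the assumed per-clause inequalities on clauses containing both variables. The only difference is organizational—you argue clause by clause while the paper partitions the clauses into three sets and manipulates set-level scores—but the underlying argument is identical.
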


The proof can be found in Appendix \ref{appedix}.
Recall the Example \ref{pair_example}, the {\it pairwise operation} that simultaneously assigns $b$ to 2 and $c$ to 1, denoted as $op_1$, is decreasing,
while none of the operations that individually assign $b$ to 2 and $c$ to 1, denoted as $op_2$ and $op_3$, is decreasing.
The reason lies in that $b$ and $c$ both appear in the clause $c_2$, and $score_{\{c_2\}}(op_1)>score_{\{c_2\}}(op_2)+score_{\{c_2\}}(op_3)$.

\section{Compensation-based Picking Heuristic}
\label{two-level}
To find a decreasing {\it pairwise operation} when there is no decreasing $cm$ operation,
we first introduce a method based on the concept of {\it compensation} to determine {\it pairwise operations}, which can satisfy the necessary condition in Proposition \ref{pro_pair} (Details can be found in Lemma \ref{necessary} of Appendix \ref{appendixb}).
Then, among these {\it pairwise operations}, we propose a two-level heuristic to distinguish them, by considering the potential of the compensated literals becoming falsified.

\subsection{Pairwise Operation Candidates for Compensation}
\label{detail implementation}
%

{\bf Motivation for compensation:} Since one variable may exist in multiple literals, changing a variable will affect all literals containing the variable, and may make some originally true literals become false.
 Moreover, if the literal is the reason for some clauses being satisfied, i.e., it is the only true literal in the clause, then falsifying the literal also falsifies the clause. 

  Formally, 
  for an operation $op$, we define a special set of literals $CL(op)=\{\ell | \ell$ is true and is the only true literal for some clauses, but $\ell$ would become false after individually performing $op\}$.
 After performing an operation $op$, the literals in the set $CL(op)$ are of special interest since some clauses containing such a literal would become falsified.

 {\bf Concept of compensation: }
 Let $op_1$ and $op_2$ denote two operations modifying individual variables. To minimize the disruptions that $op_1$ might wreak on the already satisfied clauses, another operation $op_2$ is simultaneously executed to make a literal $\ell\in CL(op_1)$ remain true under the assignment after operating $op_1$.
$op_2$ is denoted as {\it compensation} for $\ell$, and literals in the set $CL$ are denoted as {\it {\bf C}ompensated {\bf L}iterals}.



{\bf Compensation-based pairwise operation:}
A pairwise operation {\it p($v_1$,$v_2$, $val_1$,$val_2$)} can be regarded as simultaneously performing a pair of operations modifying individual variables, $op_1$ assigning $v_1$ to $val_1$ and $op_2$ assigning $v_2$ to $val_2$.
The procedure to determine $op_1$ and $op_2$ is described as follows.

First, a candidate $op_1$ is chosen to satisfy a falsified clause. To this end, we pick a variable $v_1$ from a false literal $\ell_1$ in a random falsified clause, and $op_1$ is the corresponding {\it cm} operation, $cm(v_1,\ell_1)$. It prioritizes literals from \textbf{hard} clauses and soft clauses are considered only when all hard clauses are satisfied.
To obtain sufficient candidates of $op_1$, $K$ (a parameter) literals are randomly selected from overall falsified clauses, and all variables in these literals are considered. 
The set of all candidate $op_1$ found in this stage is denoted as $CandOp$.

Second, given a literal $\ell_2\in CL(op_1)$, the $op_2$ w.r.t $op_1\in CandOp$ is determined to guarantee that $\ell_2$ remains true after simultaneously performing $op_1$ and $op_2$, meaning that $op_2$ is selected to {\it compensate} for $\ell_2$.
Specifically, to determine $op_2$, we pick a variable $v_2$ appearing in a literal $\ell_2\in CL(op_1)$, and calculate the value $val_2$  according to $cm(v_2,\ell_2)$ assuming $op_1$ performed.


\begin{example}
    Let us consider the formula presented in example \ref{pair_example}: $F=c_1\wedge c_2\wedge c_3=(a-b\le-2)\wedge(b-c\le1)\wedge(c-a\le1)$ where the penalty weight of each clause is 1,
    and the current assignment is $\alpha=\{a=0,b=0,c=0\}$.
    There is no decreasing {\it critical move} operation.
    As shown in Fig.~\ref{fig:pair}, performing $op_1=cm(b,a-b\le-2)$ that assigns $b$ to $2$ would falsify the literal $\ell=(b-c\le 1)$, the only true literal in $c_2$.
    To compensate for $\ell$, the operation $op_2$ that assigns $c$ to 1 is determined according to $cm(c,\ell)$, assuming that $op_1$ has been performed.
    All clauses become satisfied after simultaneously performing $op_1$ and $op_2$, and thus a decreasing pairwise operation $p(b,c,2,1)$ is found.

\end{example}

\begin{figure}
    \centering
    \includegraphics[width=0.5\linewidth]{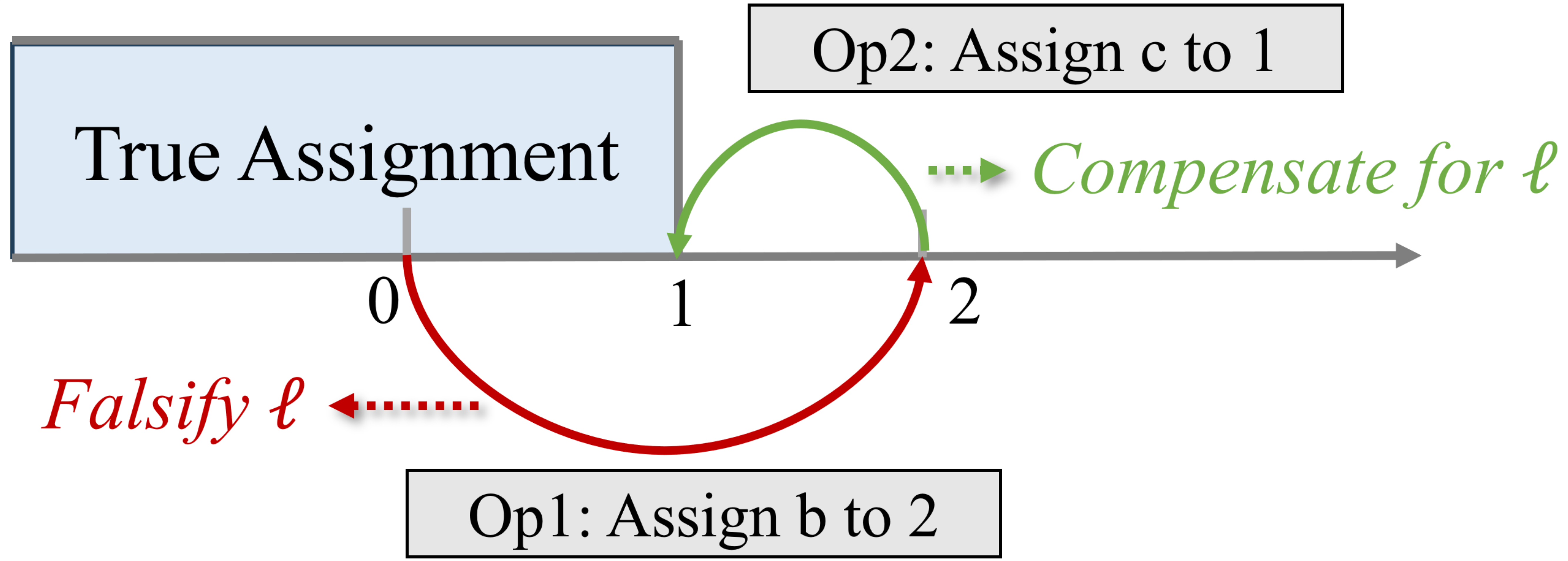}
    \caption{Given the literal $\ell=(b-c\le1)$, the axis refers to the value of $(b-c)$. Individually performing $op_1$ will falsify $\ell$, while $op_2$ can compensate for $\ell$.}
    \label{fig:pair}
\end{figure}

Note that there may exist multiple variables in the literal $\ell_2\in CL(op_1)$, and thus given the operation $op_1$, and the literal $\ell_2$ selected in the second step,  a set of pairwise operations is determined by considering all variables in $\ell_2$ except the variable in $op_1$, denoted as {\bf pair\_set($\ell_2$,$op_1$)}.

\subsection{Two-level Heuristic}

Among the literals selected  in the second step of determining a {\it pairwise operation},
we consider that some literals are more likely to become false, and should be given higher priority.
Thus, we distinguish such literals from others. 
They are formally defined as follows.

\begin{definition}
    Given an assignment $\alpha$, and a literal in the form of $\sum_{i=0}^n {a_ix_i  \leq k}$, we denote $\Delta=\sum_{i=0}^n{a_i\alpha(x_i)}-k$. The literal is a {\bf fragile} literal if $\Delta=0$ holds.
    Any true literal with $\Delta<0$ is {\bf safe}.
\end{definition}

A {\it fragile} literal with $\Delta=0$ is true as the inequality $\Delta\le0$ holds, but it can be falsified by any little disturbance that enlarges $\Delta$ of the corresponding {\it fragile} literal.
Comparatively, a literal is {\it safe} means that even if the value of a variable in the literal changes comparatively larger (as long as $\Delta\le0$ after the modification), it remains true.

\begin{example}
    Consider the formula: $F=l_1\wedge l_2\wedge l_3= (b-a\le -1) \wedge (a-c\le 0)\wedge (a-d\le3)$, where the current assignment is $\alpha=\{a=0,b=0,c=0,d=0\}$.
    $l_2$ and $l_3$ are two true literals.
    $l_2$ is a {\it fragile} literal since its $\Delta=0$, while $l_3$ is a {\it safe} literal since its $\Delta<0$.
    We consider that $l_2$ is more fragile than $l_3$, since a small disturbance, $cm(a,l_1)$ that assigns $a$ to 1, can falsify $l_2$ but not $l_3$.
\end{example}

In the second step of determining a pairwise operation,
among those {\it compensated literals},
we prefer  {\it fragile} literals and prioritize the corresponding pairwise operations.
Based on the intuition above, a two-level picking heuristic is defined:

\begin{itemize}
    \item We first choose the decreasing pairwise operation involving a {\it fragile} compensated literal.
    \item If there exists no such decreasing pairwise operation, we further select the pairwise operation involving {\it safe} compensated literals.
\end{itemize}


\subsection{Algorithm for Picking a Pairwise Operation}
Based on the picking heuristic, the algorithm for picking a pairwise operation is described in Algorithm \ref{pick pair}.
In the beginning, we initialize the set of pairwise operations involving {\it fragile} and {\it safe} compensated literals, denoted as $FragilePairs$ and $SafePairs$ (line 1).
Firstly, $K$ (a parameter) false literals are picked from overall falsified clauses, and all {\it critical move} operations in these literals are added into $CandOp$ (lines 2--7).
Note that it prioritizes hard clauses over soft clauses.


Then, for each operation $op_1\in CandOp$, we go through each compensated literal $\ell_2\in CL(op_1)$.
If $\ell_2$ is {\it fragile} (resp. {\it safe}), the set of corresponding {\it pairwise operations} determined by $pair\_set(\ell_2,op_1)$ are added to the $FragilePairs$ (resp. $SafePairs$) (line 8--13).

According to the two-level picking heuristic,
if there exist decreasing operations in $FragilePairs$,
we pick the one with the greatest $score$ (lines 14--15).
Otherwise, we pick a decreasing operation in $SafePairs$ if it exists (lines 16--17).
An operation with the greatest $score$ is selected via the BMS heuristic~\cite{cai2015balance}.
Specifically, the BMS heuristic samples $t$ pairwise operations (a parameter), and selects the decreasing one with the greatest $score$.

\begin{algorithm}[t]
\caption{pick\_pairwise\_op} 
\label{pick pair}
\KwOut{a decreasing pairwise operation if found}
   $FragilePairs:= \emptyset$,$SafePairs:= \emptyset$,
   $CandOp:= \emptyset$ $BestPair:= null$ \;
   \For {$i=1$ to $K$}{
        \If{$ {\exists}$  hard falsified clauses}{
		$\ell_1:=$ a random literal in a falsified hard clause \;
	}
 	\ElseIf{$ {\exists}$ soft falsified clauses}{
		$\ell_1:=$ a random literal in a falsified soft clause \;
	}
       $CandOp:= CandOp\bigcup\{cm(x,\ell_1)\vert x \; appears \; in \; \ell_1 \}$\;
  }
   \ForEach{$op_1$ in $CandOp$}{
       \ForEach{literal $\ell_2\in CL(op_1)$ }{
           \If{$\ell_2$ is fragile}{
              $FragilePairs:=FragilePairs$ $\cup$ $pair\_set(\ell_2,op_1)$\; 
          }
            \ElseIf {$\ell_2$ is safe}{
              $SafePairs:=SafePairs$ $\cup$ $pair\_set(\ell_2,op_1)$\;
           }
      }
  }
   \If{$ {\exists}$ decreasing operation in $FragilePairs$}{
    $BestPair:=$the operation with the greatest score picked by BMS\;
   }
   \ElseIf {$ {\exists}$ decreasing operation in $SafePairs$}{
      $BestPair:=$the operation with the greatest score picked by BMS\;
   }
   \Return $BestPair$\;
    
\end{algorithm}
\section{Local search algorithm}
\label{alg}

\begin{algorithm}[!t]		
\caption{Integer Mode of PairLS} 
\label{vnd_smt}
    \While{$non\_imp\_steps<MaxSteps $}{
	\If{$\not \exists$ falsified hard clauses AND  $cost(\alpha) < cost^*$}{
		$\alpha^* := \alpha,\; cost^* := cost(\alpha)$\;
	}
	\If{$ {\exists}$ decreasing critical move in hard falsified clauses}{
		$op := $ a decreasing critical move with the greatest score picked by BMS\;
	}
 	\ElseIf{$ {\exists}$ decreasing critical move in soft falsified clauses}{
		$op := $ a decreasing critical move with the greatest score picked by BMS\;
	}
        \lElse {$op:= pick\_pairwise\_op()$}
        \If{$op==null$}{
        update  penalty weights by Weighting-PMS\;
		\lIf{$\exists$ falsified hard clauses}{
				$c:=$ a random  falsified hard clause
		}
		\lElse{
				$c := $ a random falsified soft clause
		}
		$ op:= $ the critical move with the greatest $score$ in c\;
       }
            perform $op$ to modify $\alpha$\;
   }
\end{algorithm}

Based on the above novel components, we propose our algorithm for MaxSMT(L\\IA) called PairLS, prioritizing hard clauses over soft clauses.
PairLS initializes the complete current solution $\alpha$ by assigning all Integer variables to 0 and all Boolean variables to $false$.
Then, PairLS switches between Integer mode and Boolean mode.
When the time limit is reached, the best solution $\alpha^*$ and the corresponding best cost $cost^*$ are reported if a feasible solution can be found. Otherwise, ``No solution found'' is reported.


The  Integer mode of PairLS is described in Algorithm \ref{vnd_smt}.
The current solution $\alpha$ is iteratively modified until the number of non-improving steps $non\_impr\_step$ exceeds the threshold $bounds$ (line 1).
If a feasible solution with a smaller $cost$ is found, then the best solution $\alpha^*$ and the best cost $cost^*$ are updated (lines 2--3).
In each iteration, the algorithm first tries to find a decreasing {\it critical move} operation with the greatest score via BMS heuristic~\cite{cai2015balance} (line 4--7).
Note that it prefers to pick operations from falsified hard clauses, and falsified soft clauses are picked only if all hard clauses are satisfied.
If it fails to find any decreasing {\it critical move} operation, indicating that it falls into the local optimum of modifying individual variables, then it continues to search the neighborhood of {\it pairwise operation} (line 8).
If there exists no decreasing operation in both neighborhoods, the algorithm further escapes from the local optimum by updating the {\it penalty weight} (line 10), and satisfying a random clause by performing a {\it critical move} operation in it, preferring the one with the greatest {\it score} (lines 11--13).
Specifically, it also prioritizes hard clauses over soft clauses.

In the Boolean mode, the formula is reduced to a subformula that purely contains Boolean variables, which is indeed a MaxSAT instance.
Thus, our algorithm performs in the same way as SATLike3.0\footnote{ \url{https://lcs.ios.ac.cn/~caisw/Code/maxsat/SATLike3.0.zip}}, a state-of-the-art local search algorithm for MaxSAT~\cite{lei2018solving}.

\section{Experiments}
\label{experiments}
Experiments are conducted on 3 benchmarks to evaluate PairLS, comparing it with state-of-the-art MaxSMT solvers.
The promising experimental result indicates that our algorithm is efficient and effective in most instances.
We also present the evolution of solution quality over time, showing that PairLS can efficiently find promising solutions within a short time limit.
Moreover, the ablation experiment is carried out to confirm the effectiveness of our proposed strategies.

\subsection{Experiment Preliminaries}
{\bf Implementation:}
PairLS is implemented in C++ and compiled by g++ with the '-O3' option enabled.
There are 3 parameters in the solver:
$L$ for switching modes;
$t$ (the number of samples) for the BMS heuristic;
$K$ denotes the size of $CandOp$. 
The parameters are tuned according to our preliminary experiments and suggestions from the previous literature. They are set as follows:
$L=20$, $t=100$, $K=10$.


{\bf Competitors:}
We compare PairLS with 2 state-of-the-art MaxSMT solvers, namely OptiMathSAT(version 1.7.3) and $\nu Z$(version 4.11.2). 
OptiMathSAT applies MaxRes as the MaxSAT engine, denoted as Opt\_{res}, while the default configuration encodes the MaxSMT problem as an optimization problem, denoted as Opt\_{omt}.
$\nu Z$ also has 2 configurations based on the MaxSAT engines MaxRes and WMax, denoted as $\nu Z$\_{res}  and $\nu Z$\_{wmax}, respectively.
The binary code of OptiMathSAT and $\nu Z$ is downloaded from their websites.

{\bf Benchmarks:}
Our experiments are conducted on 3 benchmarks.
Those instances where the hard constraints are unsatisfiable are excluded, as they do not have feasible solutions.

    {\bf Benchmark MaxSMT-LIA:} This benchmark consists of 5520 instances generated based on SMT(LIA) instances from SMT-LIB\footnote{ \url {https://clc-gitlab.cs.uiowa.edu:2443/SMT-LIB-benchmarks/QF_LIA}}.
    The original SMT(LI\\A) benchmark consists of 690 instances from 3 families, namely bofill, convert, and wisa\footnote{{SMT(LIA) instances from other families are excluded because most of them are in the form of a conjunction of unit clauses, and thus the generation method is not applicable, since each produced soft assertion is also a hard assertion.}}.
    We adopt the same method to generate instances as in previous literature~\cite{fazekas2018implicit}: adding randomly chosen arithmetic atoms in the original problem with a certain proportion as unit soft assertions.
    4 proportions of soft clauses (denoted as $SR$) are applied, namely 10\%, 25\%, 50\% and 100\%.
     2 MaxSMT instances can be generated from each original SMT instance, based on different ways to associate soft clauses with  weights:
     one associates each soft clause with a unit weight of 1, and the other associates each soft clause with a random weight between 1 and the total number of atoms.
    Instances with unit weights and random weights are not distinguished as in~\cite{fazekas2018implicit}. 
    The total number of instances is $690\times2\times4=5520$, where 690 denotes the number of original SMT(LIA) instances, 2 denotes 2 kinds of weights associated with soft clauses, and 4 denotes the 4 proportions of soft clauses.
    Note that the ``bofill'' family was adopted in~\cite{fazekas2018implicit}, while the family of ``convert'' and ``wisa'' are new instances.
    
    
    {\bf Benchmark MaxSMT-IDL:} This benchmark contains 12888 new MaxSM-\\T instances generated by the above method, based on 1611 SMT(IDL) instances including all families from SMT-LIB\footnote{ \url{https://clc-gitlab.cs.uiowa.edu:2443/SMT-LIB-benchmarks/QF_IDL}} (similar to MaxSMT-LIA benchmark, the total number of instances is $1611\times2\times4=12888$).
    Instances with unit weights and random weights are also not distinguished when reporting results.
    
    {\bf Benchmark LL:} 
    The benchmark was proposed in~\cite{cimatti2013modular}.
    Unsatisfiable instances and instances over linear real arithmetic are excluded, resulting in 114 instances in total.
    56 instances contain soft clauses with unit weights of 1, and 58 instances contain soft clauses with random weights ranging from 1 to 100.
    Instances with {\it Unit} weights and {\it Random} weights are distinguished as in~\cite{cimatti2013modular}.
    



{\bf Experiment Setup:}
All experiments are conducted on a server with Intel Xeon Platinum 8153 2.00GHz and 2048G RAM under the system CentOS 7.7.1908.
Each solver executes one run for each instance in these benchmarks, as they contain sufficient instances.
The cutoff time is set to 300 seconds for the MaxSMT-LIA and MaxSMT-IDL benchmarks as previous work~\cite{fazekas2018implicit}, and 1200 seconds for the LL benchmark as previous work~\cite{cimatti2013modular}.

For each family of instances, we report the number of instances where the corresponding solver can find the best solution with the smallest $cost$ among all solvers, denoted by $\#win$, and the average running time to yield those best solutions, denoted as $time$.
Note that when multiple solvers find the best solution with the same $cost$ within the cutoff time, they are all considered to be winners. 
The solvers with the most $ \#win$ in the table are emphasized with {\bf bold} value.

The {\it solution} found by solvers and the corresponding time are defined as follows: As for complete solvers, $\nu Z$ and OptiMathSAT, we take the best upper bound found within the cutoff time as their {\it solution}, and the time to find such upper bound is recorded by referring to the log file.
Note that the proving time for complete solvers is excluded.
As for PairLS, the best {\it solution} found so far within the cutoff time and the time to find such a solution are recorded.


\subsection{Comparison to Other MaxSMT Solvers}
{\bf Results on benchmark MaxSMT-LIA:} As presented in Table \ref{lia_table}, PairLS shows competitive and complementary performance on this benchmark.
Regardless of the proportion of soft clauses $SR$, PairLS always leads in the {\it total} number of winning instances.
On the ``bofill'' family, PairLS performs better on instances with larger $SR$, confirming that PairLS is good at solving hard instances.
On the ``convert'' family, PairLS outperforms all competitors regardless of $SR$.
On the ``wisa'' family, PairLS cannot rival its competitors.
In Fig.3 of Appendix C,
we also present the run time comparison between PairLS and the best configuration of competitors, namely $\nu Z$\_{res} and Opt\_res. 
The run time comparison indicates that PairLS is more efficient than Opt\_res and is complementary to $\nu Z$\_{res}. 

{\bf Results on benchmark MaxSMT-IDL:} As presented in Table \ref{idl_table}, PairLS can significantly outperform all competitors regardless of the proportion of soft clauses.
In the overall benchmark, PairLS can find a better solution than all competitors on 53.5\% of total instances, and it can lead the best competitor by 1224 ``winning'' instances, confirming its dominating performance.
In Fig.4 of Appendix C,
we also present the run time comparison between PairLS and the best configuration of competitors, namely $\nu Z$\_{res} and Opt\_omt, indicating that PairLS is more efficient than competitors in instances with small $SR$. 

{\bf Results on LL benchmark:} 
The results are shown in Table \ref{LL_table}.
PairLS shows comparable but overall poor performance compared to its competitors on this benchmark.
One possible reason is that the front-end encoding for these benchmarks would generate many auxiliary Boolean variables, while PairLS cannot effectively explore the Boolean structure as LS-LIA~\cite{cai2022local}.
Specifically, the average number of auxiliary variables in this benchmark is 1220, while the counterparts in MaxSMT-LIA and MaxSMT-IDL are 327 and 528.

\begin{table}[]
\centering
\caption{Results on benchmark MaxSMT-LIA. The results are classified according to the proportion of soft clauses, $SR$. $Sum$ presents the overall performance.}
\label{lia_table}
\renewcommand\arraystretch{1.0}
\setlength{\tabcolsep}{0.5mm}\scalebox{1.0}{
\begin{tabular}{@{}llllllll@{}}
\toprule
\multirow{2}*{$SR$} & \multirow{2}*{family} & \multirow{2}*{\#inst} 
& \ \ Opt\_omt& \ \ Opt\_res&
$\ \ \ \nu Z$\_res& $\ \ \nu Z$\_wmax 
&\ \ \ PairLS \\
&&
&$\#win$($time$)
&$\#win$($time$)
&$\#win$($time$)
&$\#win$($time$)
&$\#win$($time$)\\ \midrule
\multirow{4}{*}{10\%}
&bofill & 814 &  773(17.1) & 777(7.1)& 758(7.9) & \textbf{797}(7.6) & 762(10.9)  \\
 &convert & 560 & 445(21.1) & 495(1.2) & 228(18.4)&8(103.1)  & \textbf{558}(1.1)   \\ 
  &wisa  & 6 & 4(27.4) & 3 (33.3)& \textbf{6} (11.3)& 0(0) & 0(0)      \\ [0.6ex] 
&    Total & 1380 & 1222(18.6)  & 1275(4.9) & 992(10.3) &805(8.5)& \textbf{1320}(6.7)  \\ \midrule

\multirow{4}{*}{25\%}
&bofill  &  814    & 736(59.2) & 720 (17.1) & 677(21.9) &\textbf{776}(21.8)& 641(35.4)  \\
 &convert & 560 & 415(16.3)   & 493(12.7)   & 205(13.3)& 5(125.6)  &\textbf{558}(1.4)    \\ 
  &wisa  & 6 &\textbf{4}(14.5) & 1(11.3)& \textbf{4}(73.2) & 0(0)  & 0(0)     \\ [0.6ex] 
&    Total & 1380 & 1155(43.6)  & \textbf{1214}(16.1) & 886(20.1) & 781(22.5) & 1199(19.6)    \\ \midrule

\multirow{4}{*}{50\%}
&bofill & 814 & 82(231.0) & 489(12.1) & 515(33.1) & 217(35.8) & \textbf{542}(66.7) \\
 &convert &  560 & 405(22.3)&   508(9.1)  & 177(19.8) & 0(0)  & \textbf{558}(1.2)   \\ 
  &wisa  & 6 & 3(36.6)& 1(11.3)& \textbf{5}(39.2)& 0(0)  & 0(0)      \\ [0.6ex] 
&    Total & 1380 & 490(57.3)  & 998(10.8) & 697(29.8)   & 217(39.7)  & \textbf{1100}(33.4)  \\ \midrule

\multirow{4}{*}{100\%}
&bofill & 814 & 0(0)& 0(0)& 402(75.6)& 15(20.7) & \textbf{601}(128.8) \\
 &convert &   560 & 399(27.3)  & 503(14.7)  & 185(19.3)& 19(55.0)   & \textbf{558}(1.2)     \\ 
  &wisa  &   6 & 1(162.2)& 1(193.3)& \textbf{6}(38.3) & 0(0)   & 0(0)     \\ [0.6ex] 
&    Total & 1380 & 400(27.7) & 504(15.0) & 593(57.6)  & 34(39.9) &\textbf{1159}(67.3)   \\ \midrule

$Sum$& & 5520 & 3267(34.3) & 3937(10.9)  &3168(26.1)  &1837(18.7)  & \textbf{4778}(30.7) \\ \bottomrule

\end{tabular}
}
\end{table}






\begin{table}[]
\centering
\caption{Results on MaxSMT-IDL benchmark. The results are classified according to the proportion of soft clauses, $SR$. $Sum$ presents the overall performance.}
\label{idl_table}
\renewcommand\arraystretch{1.0}
\setlength{\tabcolsep}{1.2mm}\scalebox{1.0}{
\begin{tabular}{@{}lllllll@{}}
\toprule
\multirow{2}*{$SR$} & \multirow{2}*{\#inst} 
&\ \  Opt\_omt&\ \  Opt\_res&
$\ \ \ \nu Z$\_res& \ \ $\nu Z$\_wmax 
&\ \ \ PairLS \\
&
&$\#win$($time$)
&$\#win$($time$)
&$\#win$($time$)
&$\#win$($time$)
&$\#win$($time$) \\ \midrule
10\%& 3222 & 1086(248.1) & 873(201.6) & 1151(238.4) & 839(221.5) & \textbf{1542}(159.9) \\ 
25\% & 3222 & 1171(195.2) & 934(178.6) & 1502(192.7) & 722(292.1) & \textbf{1744}(148.0) \\ 
50\%& 3222 & 1094(195.8) & 890(183.2) & 1461(198.2) & 758(290.9) & \textbf{1829}(149.3) \\ 
100\% &  3222 & 1061(204.1)& 880(198.3) & 1559(201.6) & 934(260.4) & \textbf{1782}(157.1)  \\ \midrule
$Sum$ & 12888 & 4412(210.5)  &  3577(190.2)  &   5673(205.8)&   3253(264.5) & \textbf{6897}(153.3) \\ \bottomrule

\end{tabular}
}
\end{table}

\begin{table}[]
\renewcommand\arraystretch{1.0}
\centering
\caption{Results on LL benchmark. Instances with {\it Unit} weights and {\it Random} weights are distinguished. $Sum$ presents the overall performance. }
\label{LL_table}
\setlength{\tabcolsep}{1.0mm}\scalebox{1.0}{
\begin{tabular}{@{}ccccccc@{}}
\toprule
 \multirow{2}*{Category} & \multirow{2}*{\#inst} 
&Opt\_omt& Opt\_res&
$\nu Z$\_res&$\nu Z$\_wmax 
& PairLS \\
&
&$\#win$($time$)
&$\#win$($time$)
&$\#win$($time$)
&$\#win$($time$)
&$\#win$($time$)\\ \midrule

{\it Unit}&  56 & 41(117.3) & 32(130.3) & \textbf{49}(183.2)& 15(40.0) & 23(0.4)  \\ 
{\it Random}&  58 & 45(117.9) & 37(120.3) &\textbf{53}(100.1)  & 17(39.0) & 22(0.4) \\ 
$Sum$&  114 & 86(117.6)& 69(124.9) & \textbf{102}(140.0)& 32(39.5) & 45(0.4)  \\ \midrule

\end{tabular}
}
\end{table}

\subsection{Evolution of Solution Quality}
To be more informative in understanding how the solvers compare in practice, the evolution of the solution quality over time is presented.
Specifically, we evaluate the overall performance on the MaxSMT-LIA and MaxSMT-IDL benchmark with 4 cutoff times, denoted as {\it cutoff}: 50, 100, 200, 300 seconds.
Given an instance, the proportion of the $cost$ to the sum of soft clause weights is denoted as $cost_P$~\footnote{If no feasible solution is found, then $cost_P$ is set as 1. Note that we present the average $cost_P$ rather than the average $cost$, since the $cost$ of certain instances can be quite large, dominating the average $cost$.}.
The average $cost_P$ over time is presented in Fig.\ref{ev},
showing that PairLS can efficiently find high-quality solutions within a short time.
Moreover, we also report the ``winning'' instances over time.
As shown in Table 7 in Appendix D,  
on each benchmark, PairLS leads the best competitor by at least 645 ``winning'' instances regardless of the cutoff time, confirming its dominating performance.
\subsection{Effectiveness of Proposed Strategies}
To analyze the effectiveness of our proposed strategies, two modified versions of PairLS are proposed as follows.

\begin{itemize}
    \item To analyze the effectiveness of {\it pairwise operation}, we modify PairLS by only using the {\it critical move} operator, leading to the version $v_{no\_pair}$.
    \item To analyze the effectiveness of {two-level heuristic} in {\it compensation-based picking heuristic} for picking a pairwise operation, PairLS is modified by selecting pairwise operation without distinguishing the {\it fragile} and {\it safe} compensated literals, leading to the version $v_{one\_level}$.

\end{itemize}
We compare PairLS with these modified versions on 3 benchmarks.
The results of this ablation experiment are presented in Table \ref{com_table}, confirming the effectiveness of the proposed strategies.

Moreover, we also analyze the extension for simultaneously operating on more variables. 
PairLS is modified  by simultaneously modifying three variables, where the third variable is modified to compensate for the second one, leading to the version  $v_{tuple}$. 
We conduct our experiments on MaxSMT-LIA. The results are in Apendix E.
When $N = 3$, the number of possible operations increases from O($k^2$) to O($k^3$), where $k$ is the number of variables in unsatisfied clauses. This might significantly slow down the searching process, indicating that modifying 2 variables simultaneously is the best choice of trade-off between cost and effectiveness.
\begin{figure}[H]
\centering  
\subfigure[MaxSMT-LIA]{
    \includegraphics[width=0.48\linewidth]{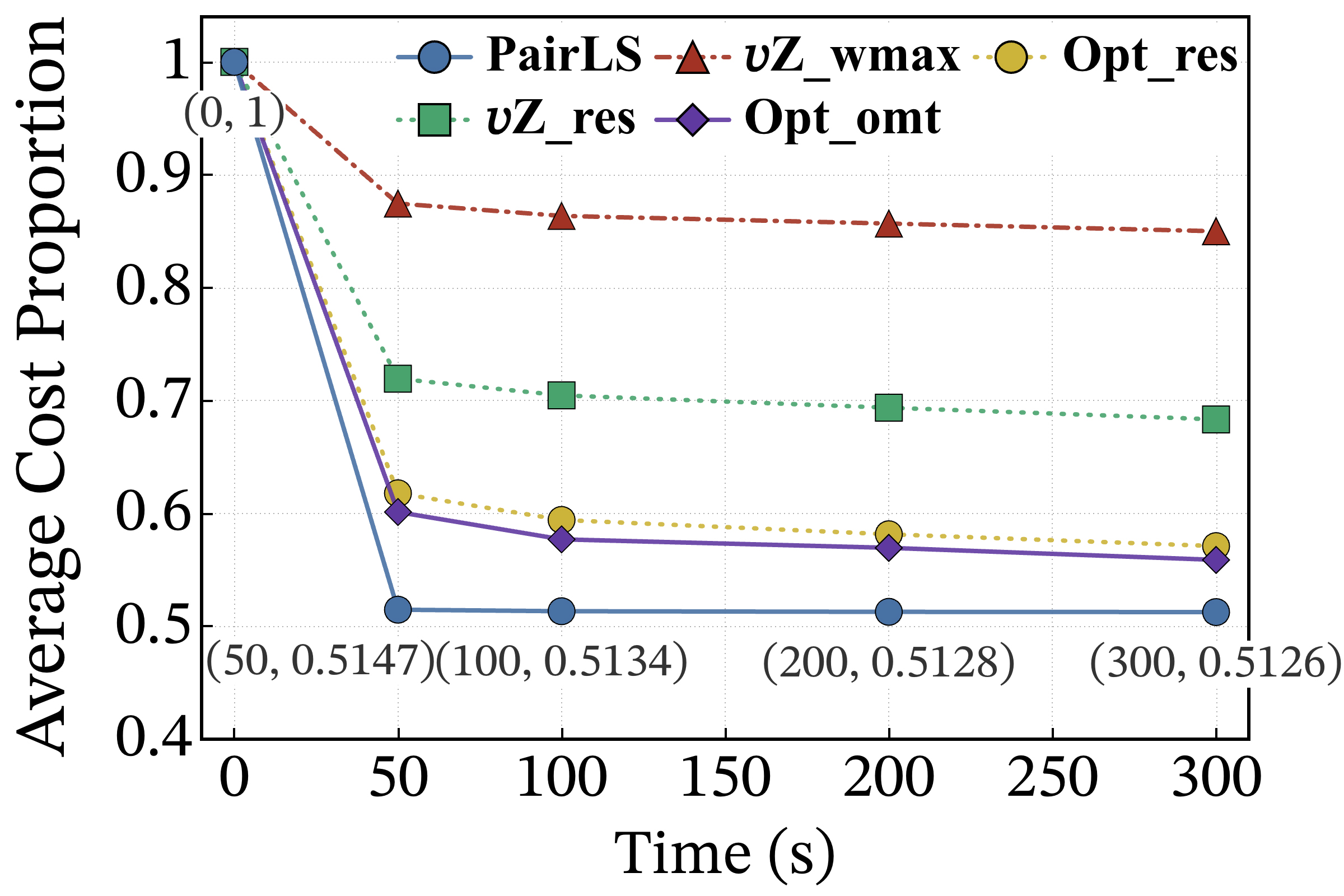}}\vspace{-1.5mm}
\subfigure[MaxSMT-IDL]{
    \includegraphics[width=0.48\linewidth]{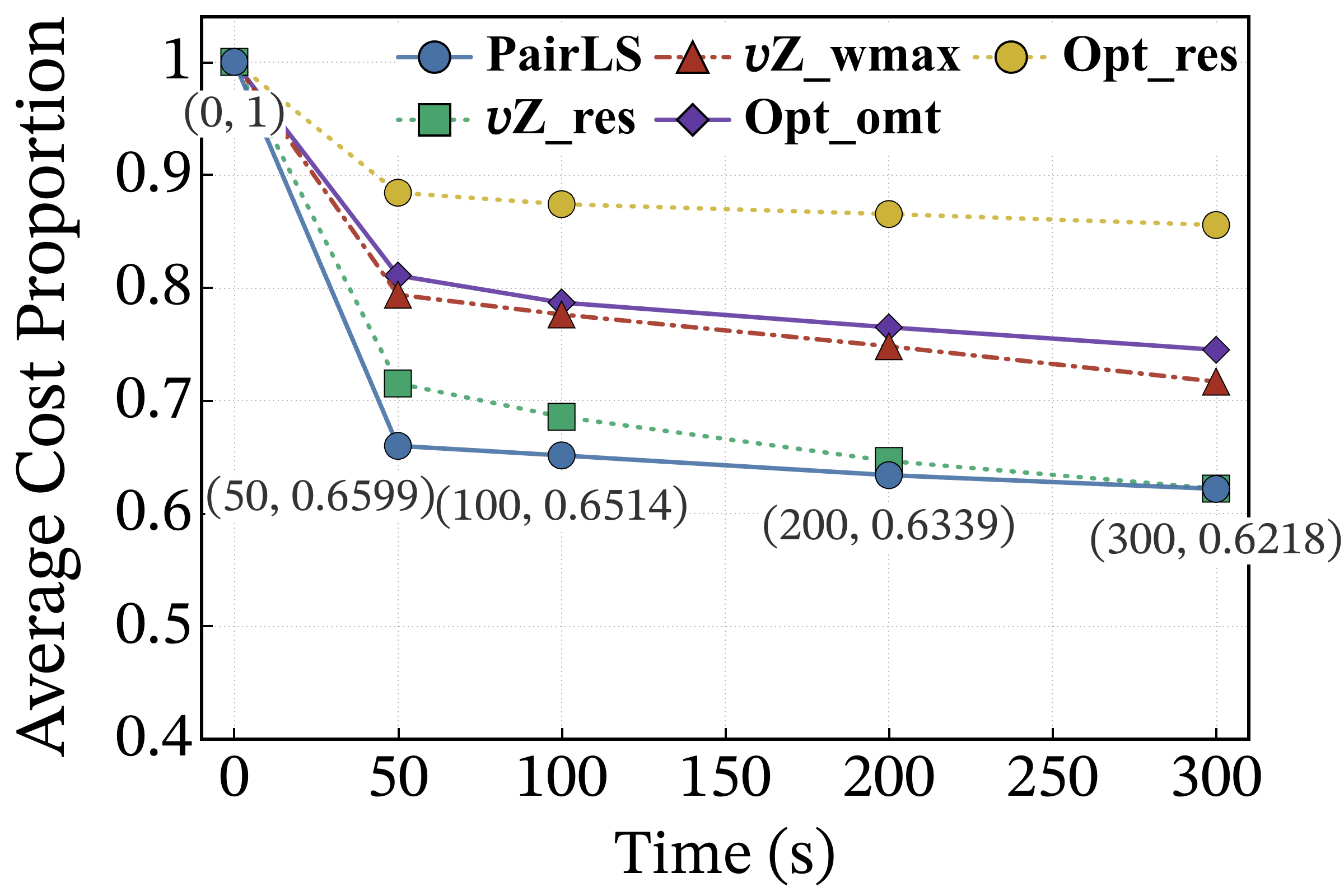}}\vspace{-1.5mm}
\caption{Evolution of average $cost_P$}
\label{ev}
\end{figure}


\begin{table}[]
    \centering
    \caption{Comparing PairLS with its modified versions. The number of instances
where PairLS performs better and worse are presented, denoted as \#better and \#worse respectively.
An algorithm is better than its competitor on a certain instance if it can find a solution with a lower $cost$. 
}
    \label{com_table}
    \renewcommand\arraystretch{1.0}
\setlength{\tabcolsep}{4mm}\scalebox{1.0}{
        \begin{tabular}{@{}llllll@{}}
            \toprule
            \multirow{2}{*}{} & \multirow{2}{*}{\#inst} & \multicolumn{2}{c}{$v_{no\_pair}$} & \multicolumn{2}{c}{$v_{one\_level}$} \\ \cmidrule(lr){3-4} \cmidrule(lr){5-6}
            & & $\#better$ & $\#worse$ & $\#better$ & $\#worse$ \\ \midrule
            MaxSMT-LIA & 5520 & {\bf 1834} & 65 & {\bf 705} &457 \\ 
            MaxSMT-IDL & 12888 & {\bf 3242} & 1962 & {\bf 3005} & 1826 \\ 
            LL & 114 & {\bf 27} & 0& {\bf 5} & 0\\ \bottomrule
        \end{tabular}
    }
\end{table}



\section{Discussion on the Extension of Pairwise Operation}
Since {\it pairwise} operator can be adapted to SMT(LIA) instances without additional modifications,
a potential extension is incorporating it into the local search algorithm for SMT(LIA).
When there is no decreasing  {\it cm} operation in the integer mode of LS-LIA (Algorithm 1 in Page 5), we apply {\it pairwise} operator to LS-LIA to enrich the search space as in PairLS, resulting in the corresponding version called LS-LIA-Pair. 
We compare LS-LIA-Pair with LS-LIA and other complete SMT solvers on SMT(LIA) instances, reporting the number of \textbf{unsolved instances} for each solver (Details are in Appendix F).
Without any \textbf{specific customization} tailored for SMT,
in both categories,  LS-LIA-Pair can solve 20 more instances compared to LS-LIA, which demonstrates that {\it pairwise} operator is an extensible method and could be further explored to enhance  the local search algorithm for SMT.



\section{Conclusion and Future Work}
In this paper, we propose the first local search algorithm for MaxSMT(LIA), called PairLS, based on the following components.
A novel {\it pairwise} operator is proposed to enrich the search space.
A compensation-based picking heuristic is proposed to determine and distinguish {\it pairwise} operations.
Experiments show that PairLS is competitive with state-of-the-art MaxSMT solvers, and {\it pairwise} operator is a general method.
Moreover, we also would like to develop a local search algorithm for MaxSMT on non-linear integer arithmetic and Optimization Modulo Theory problems.
Lastly, we hope to combine PairLS with complete solvers, since PairLS can efficiently find a solution with promising $cost$, serving as an upper bound for complete solvers to prune the search space.

\section*{Acknowledgements}{This research is supported by NSFC Grant 62122078 and CAS Project for Young Scientists in Basic Research (Grant No.YSBR-040).}

\label{conclusion}

\bibliographystyle{splncs04}


\newpage
\appendix
\section{Proof of Proposition 1}
\label{appedix}
\begin{proof}

The sets of clauses containing variables $v_1$ and $v_2$ are denoted as $C_{v_1}$ and $C_{v_2}$, respectively.
For simplicity, we denote the sets of clauses where only either variable $v_1$ or $v_2$ occur (but not both) as $C_{\bar{v_1}}=C_{v_1}\setminus(C_{v_1}\wedge C_{v_2})$ and $C_{\bar{v_2}}=C_{v_2}\setminus(C_{v_1}\wedge C_{v_2})$, respectively.
If the pairwise operation $op_1$ is decreasing, then the following inequality holds:
\begin{equation}
score(op_1)=score_{C_{\bar{v_1}}}(op_1)+score_{C_{\bar{v_2}}}(op_1)+score_{C_{v_1}\wedge C_{v_2}}(op_1)>0 \label{d_pair1}
\end{equation}

While individually operating $op_2$ and $op_3$ are not decreasing:
\begin{equation}
score(op_2)=score_{C_{\bar{v_1}}}(op_2)+score_{C_{v_1}\wedge C_{v_2}}(op_2)\le0 \label{d_pair2}
\end{equation}
\begin{equation}
score(op_3)=score_{C_{\bar{v_2}}}(op_3)+score_{C_{v_1}\wedge C_{v_2}}(op_3)\le0 \label{d_pair3}
\end{equation}

Note that, since $C_{\bar{v_1}}$ and $C_{\bar{v_2}}$ only either consist of $v_1$ or $v_2$ (but not both), $score_{C_{\bar{v_1}}}(op_1)=score_{C_{\bar{v_1}}}(op_2)$ and $score_{C_{\bar{v_2}}}(op_1)=score_{C_{\bar{v_2}}}(op_3)$.
By (\ref{d_pair1})-(\ref{d_pair2})-(\ref{d_pair3}), it can be inferred from the above inequality that:
\begin{equation}
score_{C_{v_1}\wedge C_{v_2}}(op_1)>score_{C_{v_1}\wedge C_{v_2}}(op_2)+score_{C_{v_1}\wedge C_{v_2}}(op_3) \label{d_pair4}
\end{equation}

{\bf Assumption:} assume that there does not exist any clause $c$ containing both $v1$ and $v2$ ($c\in C_{v_1}\wedge C_{v_2}$), and $score_{\{c\}}(op_1)>score_{\{c\}}(op_2)+score_{\{c\}}(op_3)$, the following inequality holds:
\begin{equation}
\forall c\in C_{v_1}\wedge C_{v_2}, score_{\{c\}}(op_1)\le score_{\{c\}}(op_2)+score_{\{c\}}(op_3) \label{d_pair5}
\end{equation}

According to (\ref{d_pair5}), by summing up the $score_{\{c\}}$ on each clause $c\in C_{v_1}\wedge C_{v_2}$ we can conclude that:

\begin{equation}
score_{C_{v_1}\wedge C_{v_2}}(op_1)\le score_{C_{v_1}\wedge C_{v_2}}(op_2)+score_{C_{v_1}\wedge C_{v_2}}(op_3) \label{d_pair6}
\end{equation}

(\ref{d_pair4}) and (\ref{d_pair6}) contradicts, and thus our assumption is false.
So we can conclude that there should be at least one clause $c$ containing both $v_1$ and $v_2$, and on $c$, $score_{\{c\}}(op_1)>score_{\{c\}}(op_2)+score_{\{c\}}(op_3)$.
\hfill$\square$
\end{proof}

\section{Compenstation-based pairwise operation can satisfy necessary condition of Proposition 1 }
\label{appendixb}
 \begin{lemma}
\label{necessary}
The compensation-based pairwise operation can satisfy the necessary condition proposed in Proposition \ref{pro_pair} to find a decreasing pairwise operation when there is no decreasing $cm$ operation. 
\end{lemma}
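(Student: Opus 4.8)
Lemma 2 claims that the compensation-based construction of a pairwise operation $p(v_1,v_2,val_1,val_2)$ — where $op_1 = cm(v_1,\ell_1)$ satisfies a falsified clause and $op_2$ compensates for some literal $\ell_2 \in CL(op_1)$ — automatically satisfies the necessary condition from Proposition 1. That condition requires the existence of a clause $c$ containing both $v_1$ and $v_2$ on which $score_{\{c\}}(op_1) > score_{\{c\}}(op_2) + score_{\{c\}}(op_3)$, where $op_2,op_3$ are the individual assignments. I need to check this.

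**Checking validity before planning.**

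Let me trace through. The literal $\ell_2 \in CL(op_1)$ is, by definition of $CL$, true and the *only* true literal of some clause $c$, but would become false after performing $op_1$ alone. So $\ell_2$ contains $v_1$ (it's affected by $op_1$), and the compensating $op_2 = cm(v_2,\ell_2)$ picks $v_2$ from $\ell_2$, so $v_2$ also occurs in $\ell_2$ — hence $c$ contains both $v_1$ and $v_2$. Good, the *clause* exists.

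Now the score claim. Since $\ell_2$ is the unique true literal of $c$, clause $c$ is currently satisfied. After $op_1$ alone, $\ell_2$ goes false, so $c$ becomes falsified: $op_1$ individually *loses* the penalty weight $w(c)$ on $c$, giving $score_{\{c\}}(op_1) = -w(c) < 0$ (the relevant individual operation in Proposition 1 is $op_1$ assigning $v_1$ to $val_1$, matching the notation $op_2$ there — I must keep the Prop. 1 relabeling straight). The compensation is designed precisely so that the *joint* operation keeps $\ell_2$ true, so on $c$ the pairwise operation loses nothing. The trouble is that Proposition 1's inequality is about $score_{\{c\}}$ of the *pairwise* op versus the *sum of the two individuals*, and here I seem to be comparing the individual $op_1$ against the pairwise op — this is the reverse direction. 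I should be cautious: the claim is that the construction *can* satisfy the condition, i.e. it does not structurally violate it, rather than that it *always* does. My plan must surface this gap.

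=== PROOF PROPOSAL BELOW ===

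\begin{proof}[Proof plan]
The plan is to show that the clause $c$ witnessing the definition of the compensated literal $\ell_2$ is exactly the clause required by the necessary condition of Proposition~\ref{pro_pair}, and then to verify the score inequality on that single clause. I adopt the relabeling of Proposition~\ref{pro_pair}: write $op_1$ for the pairwise operation $p(v_1,v_2,val_1,val_2)$, and $op_2,op_3$ for the two operations that individually assign $v_1:=val_1$ and $v_2:=val_2$ respectively. First I would recall that, by construction, $\ell_2\in CL(op_1^{cm})$ where $op_1^{cm}=cm(v_1,\ell_1)$ is the individual $v_1$-move; by the definition of $CL$, $\ell_2$ is the unique true literal of some already-satisfied clause $c$, and $\ell_2$ would turn false after performing the $v_1$-move alone. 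Since $\ell_2$ is affected by the $v_1$-move it contains $v_1$, and since the compensation picks $v_2$ from $\ell_2$ it contains $v_2$; hence $c\in C_{v_1}\wedge C_{v_2}$, supplying the required clause.

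Next I would compute $score_{\{c\}}$ for each of the three operations on this $c$. The operation $op_3$ (assigning $v_2$ alone) does not touch $\ell_2$ in a way that satisfies $c$ on its own and leaves $c$ satisfied by $\ell_2$, so $score_{\{c\}}(op_3)=0$. The operation $op_2$ (assigning $v_1$ alone) falsifies the unique true literal $\ell_2$ and thereby falsifies $c$, so $score_{\{c\}}(op_2)=-w(c)<0$, where $w(c)$ is the penalty weight of $c$. Finally, the compensation is defined precisely so that $\ell_2$ remains true after the \emph{joint} move, so $c$ stays satisfied under the pairwise operation and $score_{\{c\}}(op_1)=0$. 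Substituting gives $score_{\{c\}}(op_1)=0>-w(c)=score_{\{c\}}(op_2)+score_{\{c\}}(op_3)$, which is exactly the inequality demanded by Proposition~\ref{pro_pair}.

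I expect the main obstacle to be a careful case analysis hidden in the two ``$=0$'' claims above. For $op_3$ one must argue that assigning $v_2$ alone cannot itself falsify $c$ through a different literal of $c$ containing $v_2$, nor spuriously satisfy it; this is where the hypothesis ``$\ell_2$ is the unique true literal of $c$'' does the work, since then the truth value of $c$ is governed entirely by $\ell_2$, which $op_3$ leaves true (indeed the compensation is chosen so $cm(v_2,\ell_2)$ keeps $\ell_2$ satisfied). For $op_1$ one must confirm that the compensation actually preserves $\ell_2$ under the simultaneous change of \emph{both} $v_1$ and $v_2$, which is guaranteed because $op_2$ is computed as $cm(v_2,\ell_2)$ \emph{assuming $op_1^{cm}$ performed}. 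Once these two facts are pinned down the inequality is immediate, and the lemma follows.
\hfill$\square$
\end{proof}
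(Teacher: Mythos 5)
Your plan follows essentially the same route as the paper's own proof: take the clause $c$ for which the compensated literal $\ell_2$ is the unique true literal, observe that $\ell_2$ (hence $c$) contains both $v_1$ and $v_2$, and compare per-clause scores, using $score_{\{c\}}(op_1)=0$ (the compensation keeps $\ell_2$, hence $c$, true under the joint move) against $score_{\{c\}}(op_2)<0$ (the $v_1$-move alone falsifies the unique true literal of $c$). That skeleton is exactly the paper's argument.

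However, the step you single out as the crux is flawed. You assert $score_{\{c\}}(op_3)=0$ on the grounds that ``$op_3$ leaves $\ell_2$ true,'' and you present establishing this as the main remaining work. That claim is false in general: $val_2$ is computed as $cm(v_2,\ell_2)$ \emph{assuming $op_1$ has been performed}, so applying $op_3$ on its own can falsify $\ell_2$. Concretely, if $\ell_2$ is an equality literal $\sum_i a_i x_i = k$, the compensation offset $a_{v_2}\bigl(val_2-\alpha(v_2)\bigr)$ is chosen to cancel the offset introduced by the $v_1$-move; applied alone it shifts the sum away from $k$, falsifying $\ell_2$ (your argument that the compensation pushes the sum further below the bound is valid only for inequality literals). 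The paper sidesteps this entirely with a monotonicity observation: since $c$ is currently satisfied, any single operation can only leave it satisfied or falsify it, so $score_{\{c\}}(op_3)\le 0$, and this weaker bound suffices, because $score_{\{c\}}(op_1)=0 > score_{\{c\}}(op_2)+score_{\{c\}}(op_3)$ already follows from $score_{\{c\}}(op_2)<0$ and $score_{\{c\}}(op_3)\le 0$. So your proof is salvaged by replacing the exact-zero claim for $op_3$ with that one-line bound; as written, the case analysis you planned around it would fail for equality-type compensated literals.
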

\begin{proof}
    A compensation-based pairwise $op_1=p(v_1,v_2,val_1,val_2)$ is regarded as simultaneously performing a pair of operations modifying individual variables, $op_2$ assigning $v_1$ to $val_1$ and $op_3$ assigning $v_2$ to $val_2$.
    According to the concept of compensation, there exists a true literal $\ell\in CL(op_2)$ which is the only true literal for some clauses, but $\ell$ would become false after individually performing $op_2$.
    Moreover, $\ell$ remains true after performing $op_1$.
    
    Let $c$ denote the satisfied clause containing such compensated literal $\ell\in CL(op_2)$ as the only true literal in it.
    Since $\ell$ contains both $v_1$ and $v_2$, $c$ contains both $v_1$ and $v_2$.

    According to the definition of $score$, given a satisfied clause $c$, if it is falsified by an operation $op$, then $score_{\{c\}}(op)<0$. If $c$ remains satisfied after performing $op$, then $score_c(op)=0$:

    Since $\ell$ is the only true literal in $c$ and $\ell$ can be falsified by performing $op_2$, then $c$ will be falsified by $op_2$:
    \begin{equation}
    score_{\{c\}}(op_2)<0
    \label{d_pair7}
\end{equation}

Since $c$ has already been a satisfied clause, it can only be falsified or remain true after performing $op_3$:
\begin{equation}
    score_{\{c\}}(op_3)\le0
    \label{d_pair8}
\end{equation}

Adding (\ref{d_pair7}) and (\ref{d_pair8}), we can conclude:
\begin{equation}
    score_{\{c\}}(op_2)+score_{\{c\}}(op_3)<0
    \label{d_pair9}
\end{equation}

Since $\ell$ remains true after performing $op_1$, then $c$ remains satisfied after performing $op_1$.
\begin{equation}
    score_{\{c\}}(op_1)=0
    \label{d_pair10}
\end{equation}

According to (\ref{d_pair9}) and (\ref{d_pair10}), we can conclude that $c$ is a common clause that contains both $v_1$ and $v_2$, and  $score_{\{c\}}(op_1)>score_{\{c\}}(op_2)+score_{\{c\}}(op_3)$.
Thus the compensation-based pairwise operation satisfies the necessary condition proposed in Proposition \ref{pro_pair} to find a decreasing pairwise operation when there is no decreasing $cm$ operation.
\hfill$\square$
\end{proof}

\section{Experiment of Time}
In the MaxSMT-LIA and MaxSMT-IDL benchmark, scatter plots are adopted to present the runtime comparison between PairLS and competitors with the best configuration, if both solvers can find a solution with the same $cost$.
The x-axis (resp. y-axis) denotes the runtime of PairLS (resp. competitor). Thus, nodes above the diagonal indicate instances that can be solved faster by PairLS.

\begin{figure}
    \centering

    \begin{minipage}{0.24\linewidth}
        \subfigure[$SR$=10\%]{
            \includegraphics[width=\textwidth]{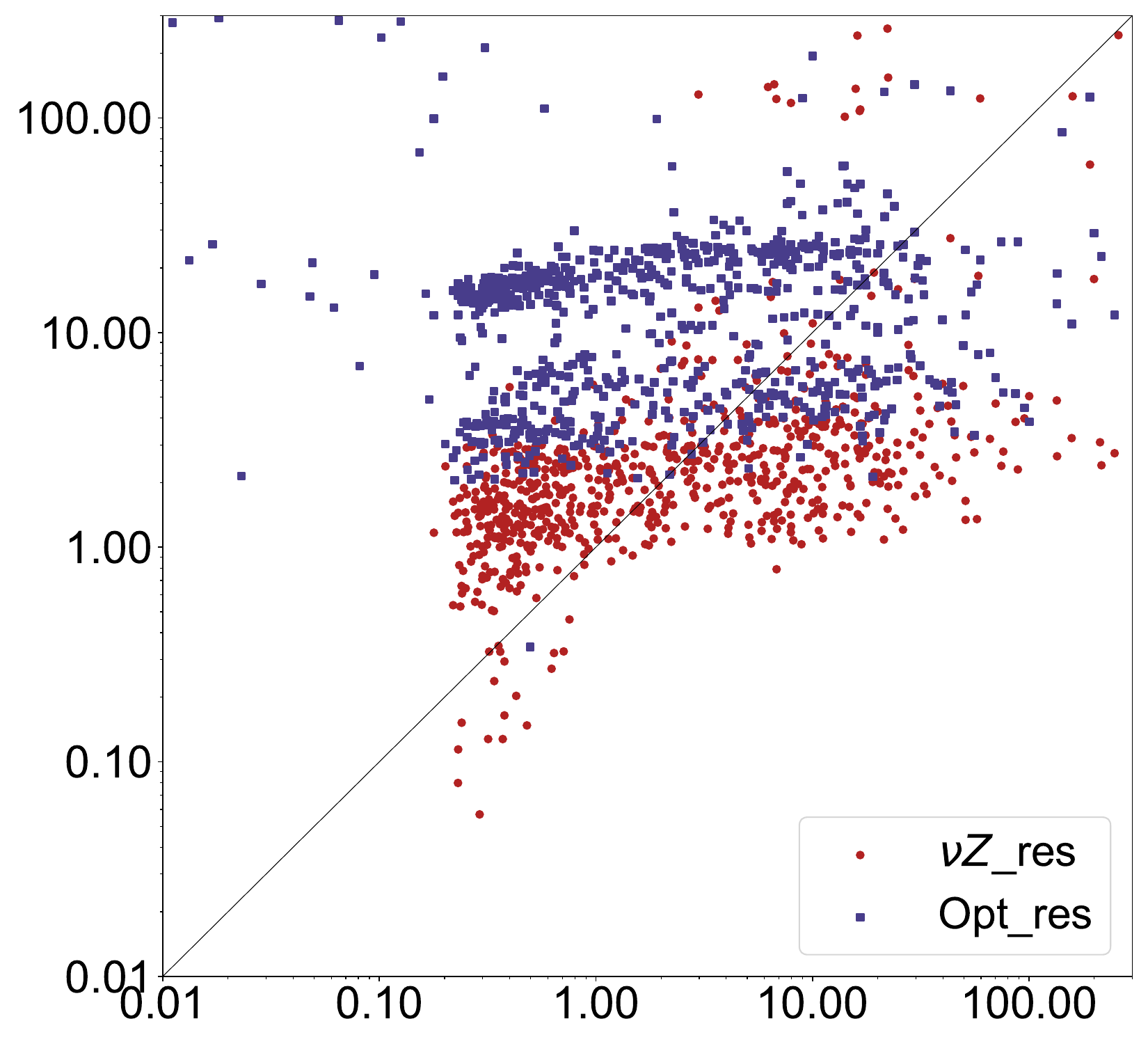}
        }
    \end{minipage}
    \begin{minipage}{0.24\linewidth}
        \subfigure[$SR$=25\%]{
            \includegraphics[width=\textwidth]{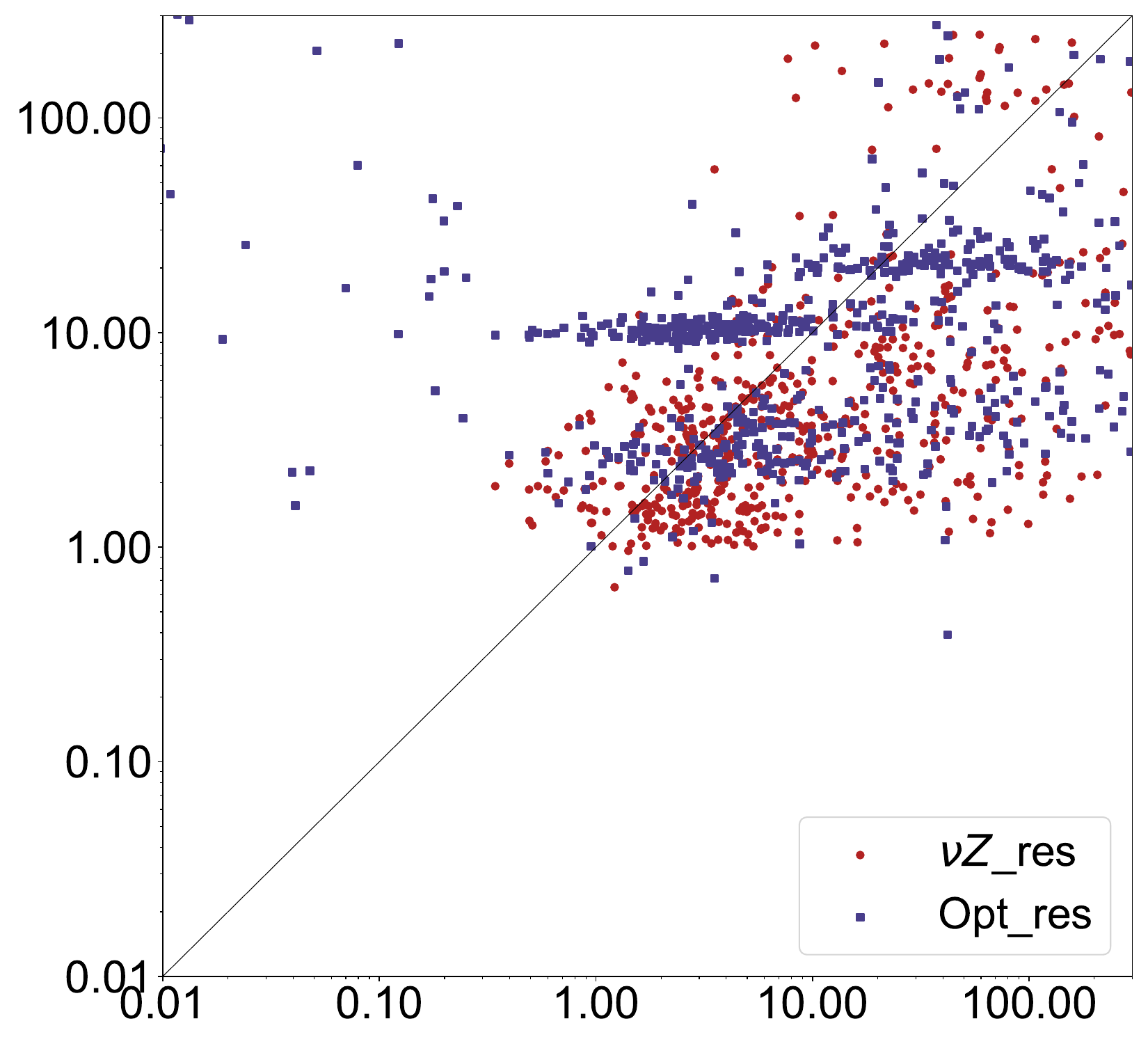}
        }
    \end{minipage}
    \begin{minipage}{0.24\linewidth}
        \subfigure[$SR$=50\%]{
            \includegraphics[width=\textwidth]{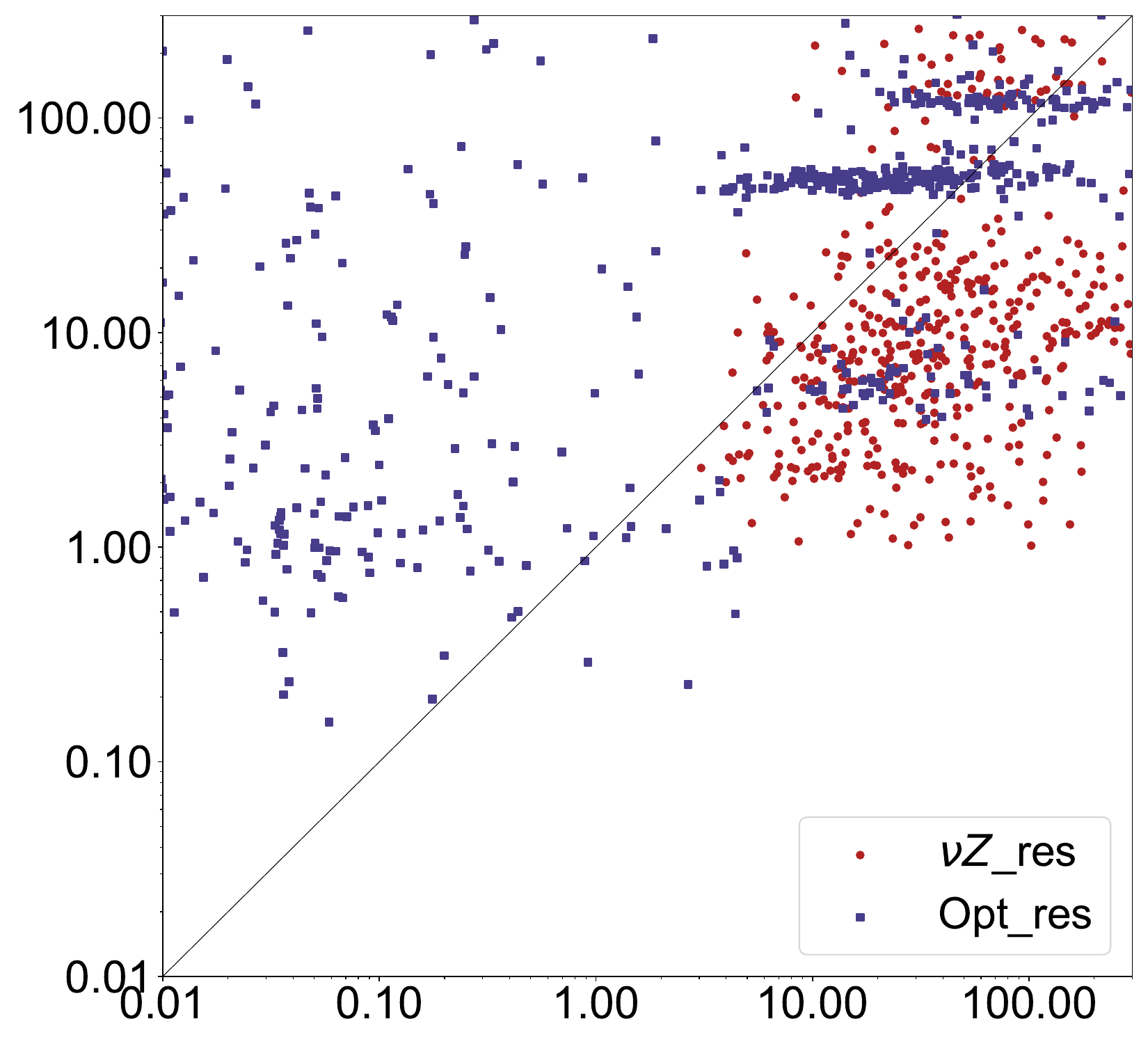}
        }
    \end{minipage}
    \begin{minipage}{0.24\linewidth}
        \subfigure[$SR$=100\%]{
            \includegraphics[width=\textwidth]{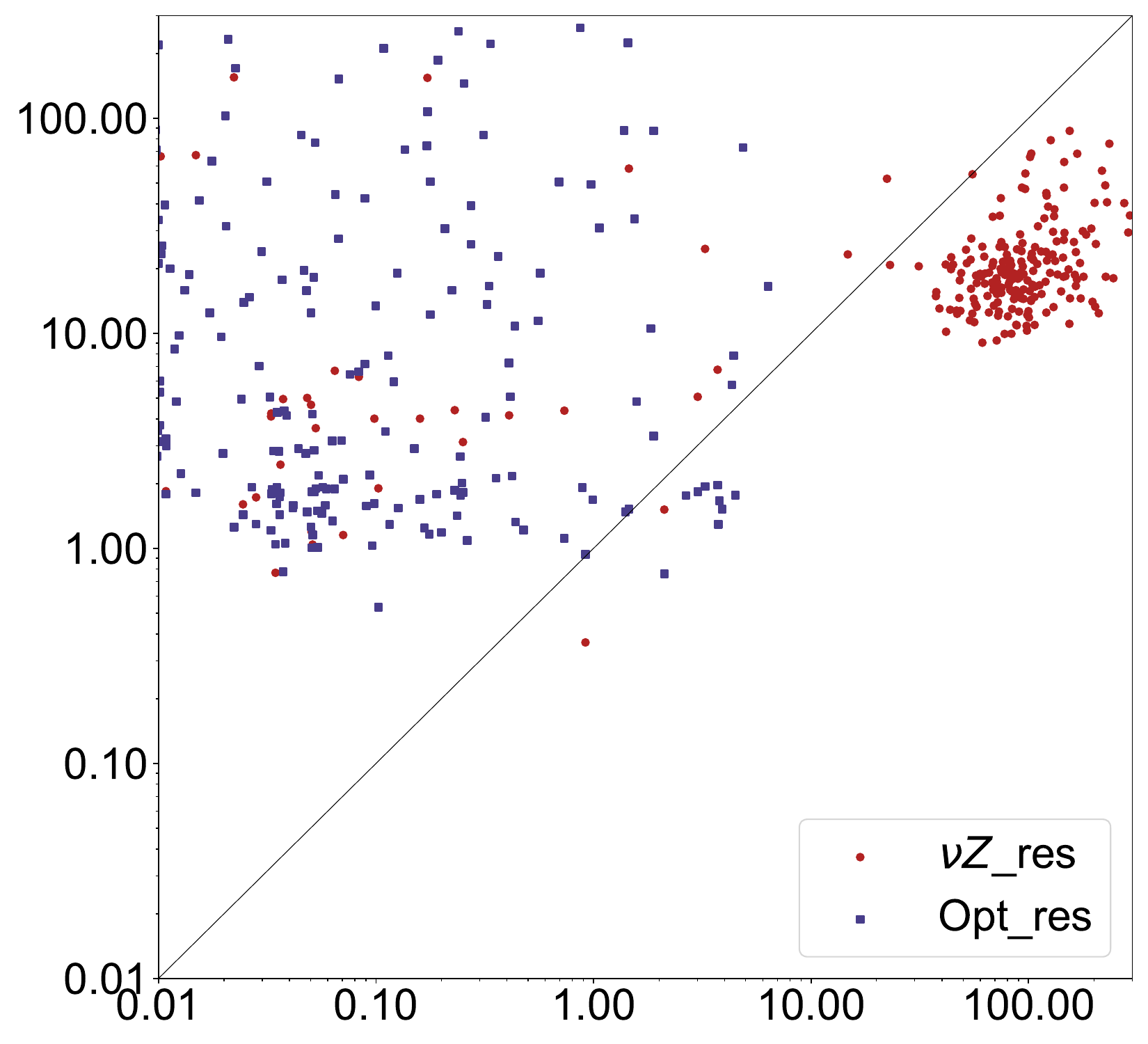}
        }
    \end{minipage}

    \caption{Run time comparison on MaxSMT-LIA. }
    \label{fig_compare_lia}
\end{figure}

\begin{figure}
    \centering

    \begin{minipage}{0.24\linewidth}
        \subfigure[$SR=10\%$]{
            \includegraphics[width=\textwidth]{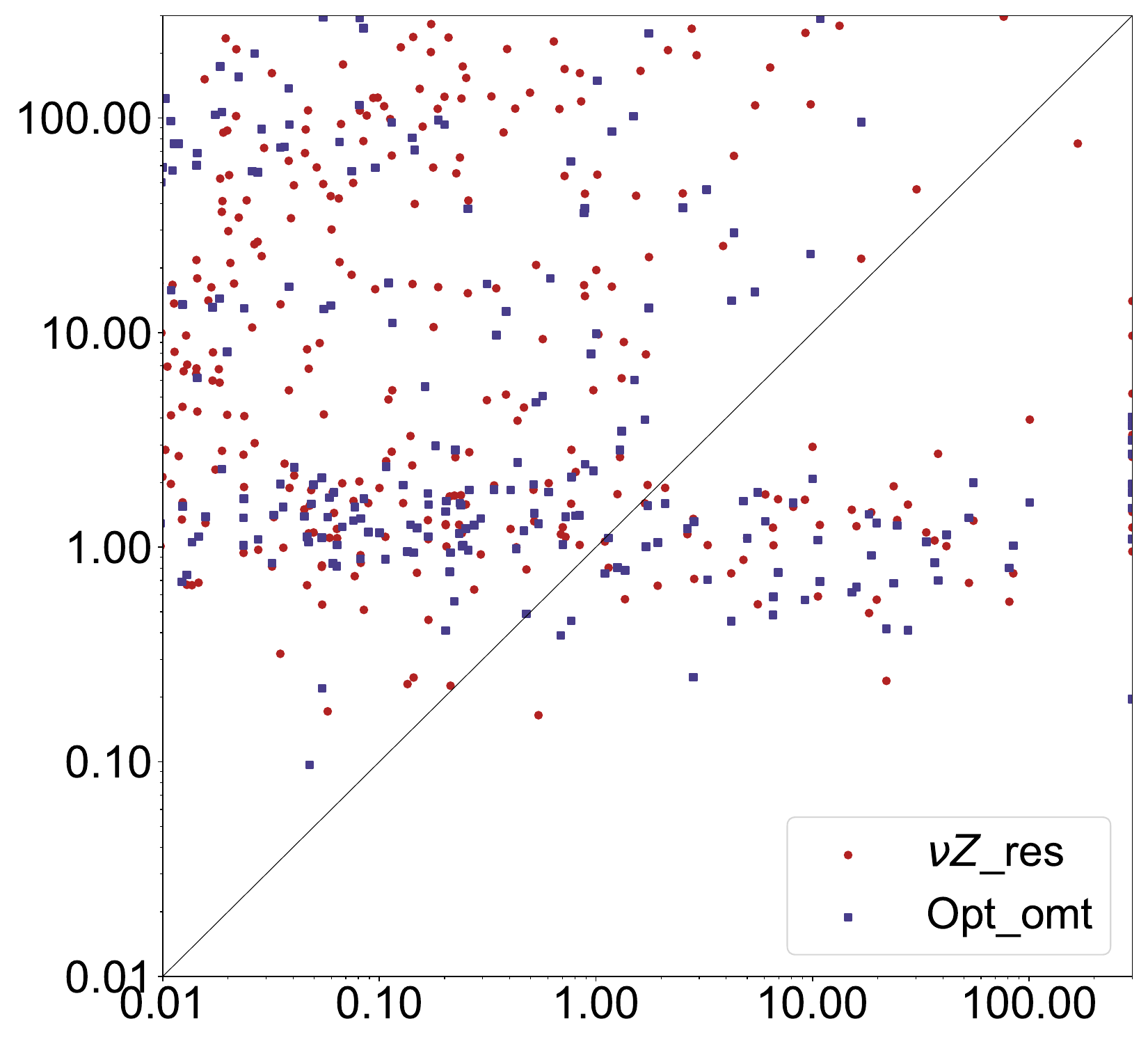}\vspace{-1.5mm}
        }
    \end{minipage}
    \begin{minipage}{0.24\linewidth}
        \subfigure[$SR=25\%$]{
            \includegraphics[width=\textwidth]{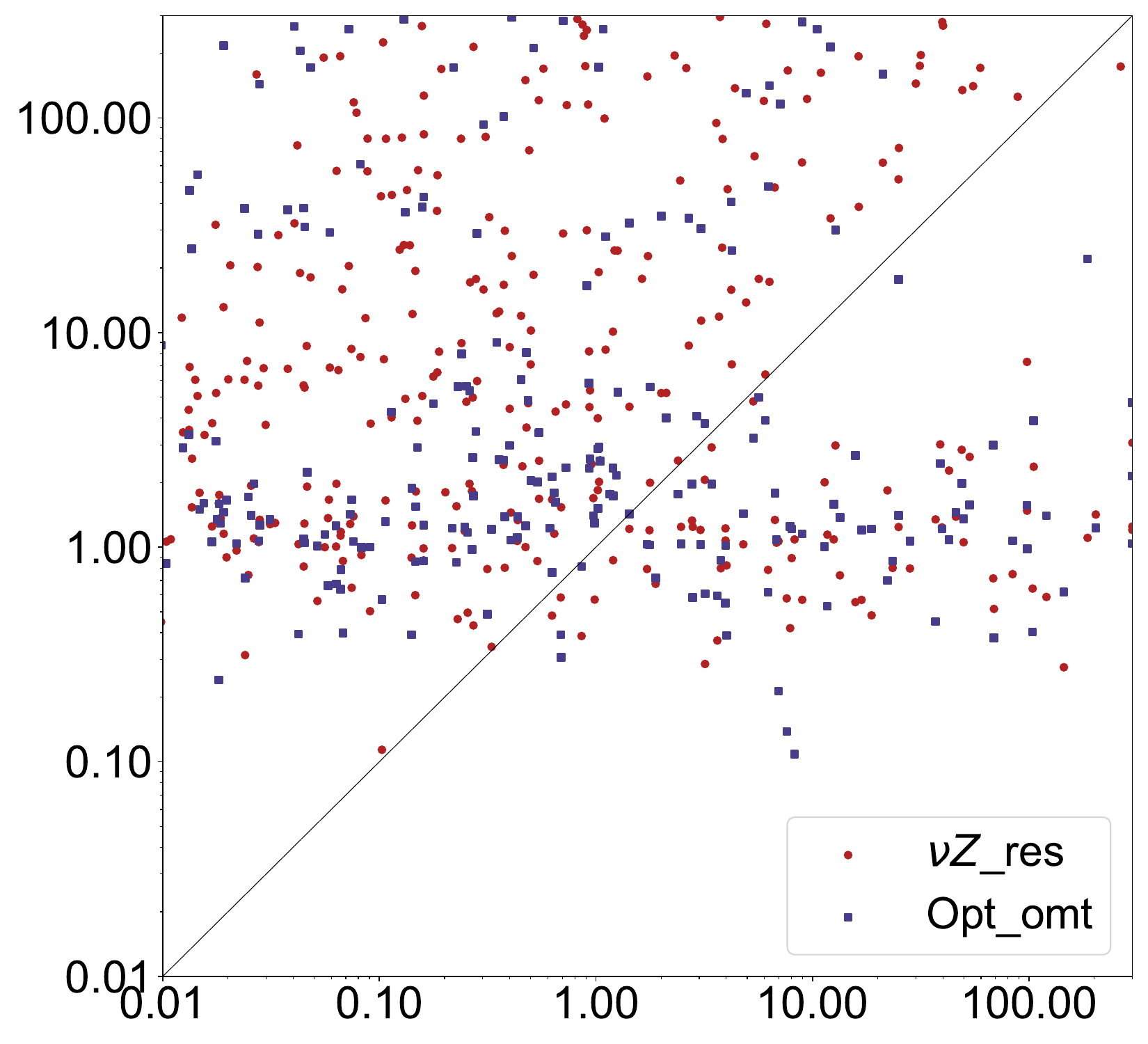}\vspace{-1.5mm}
        }
    \end{minipage}
    \begin{minipage}{0.24\linewidth}
        \subfigure[$SR=50\%$]{
            \includegraphics[width=\textwidth]{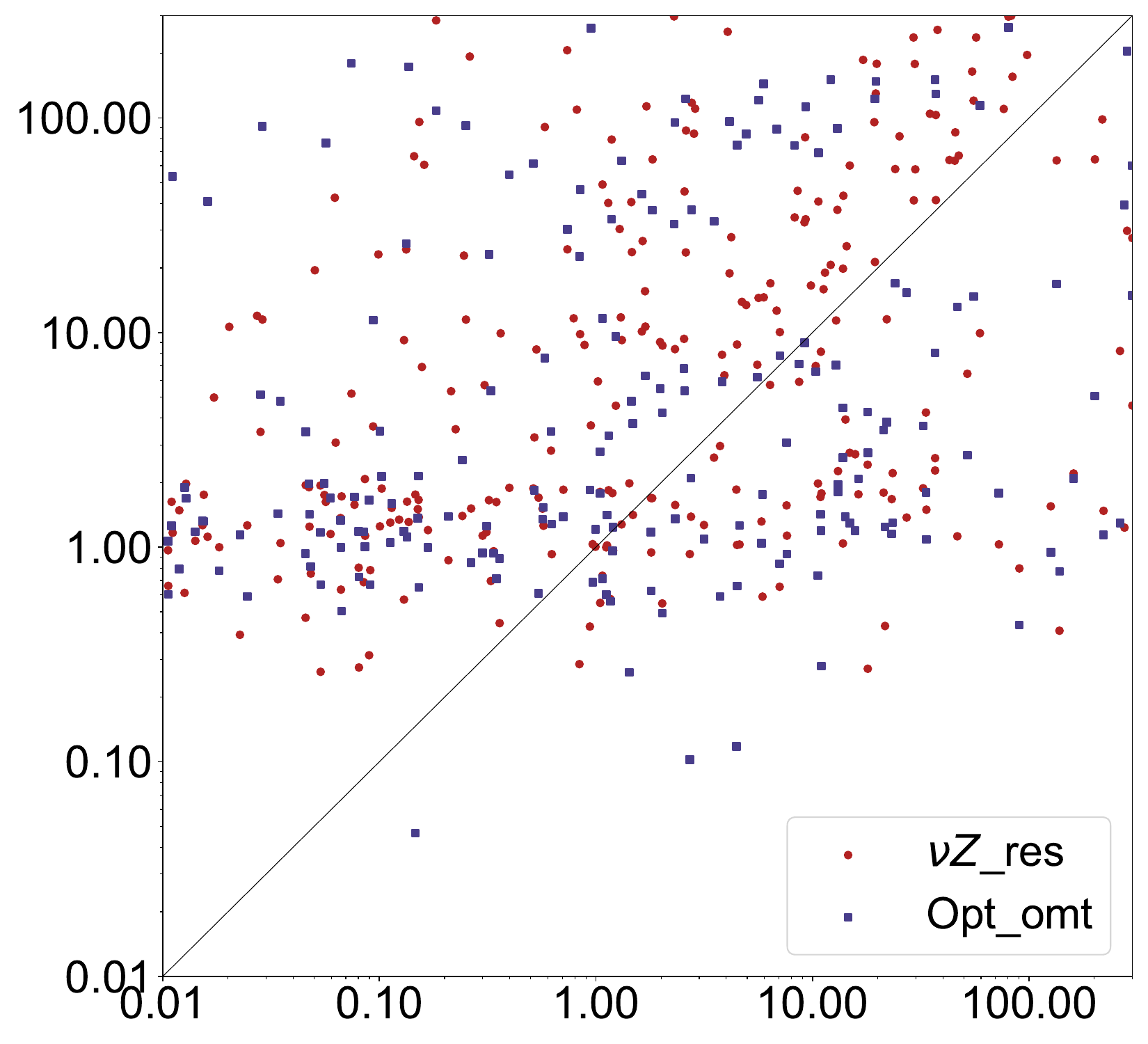}
        }
    \end{minipage}
    \begin{minipage}{0.24\linewidth}
        \subfigure[$SR=100\%$]{
            \includegraphics[width=\textwidth]{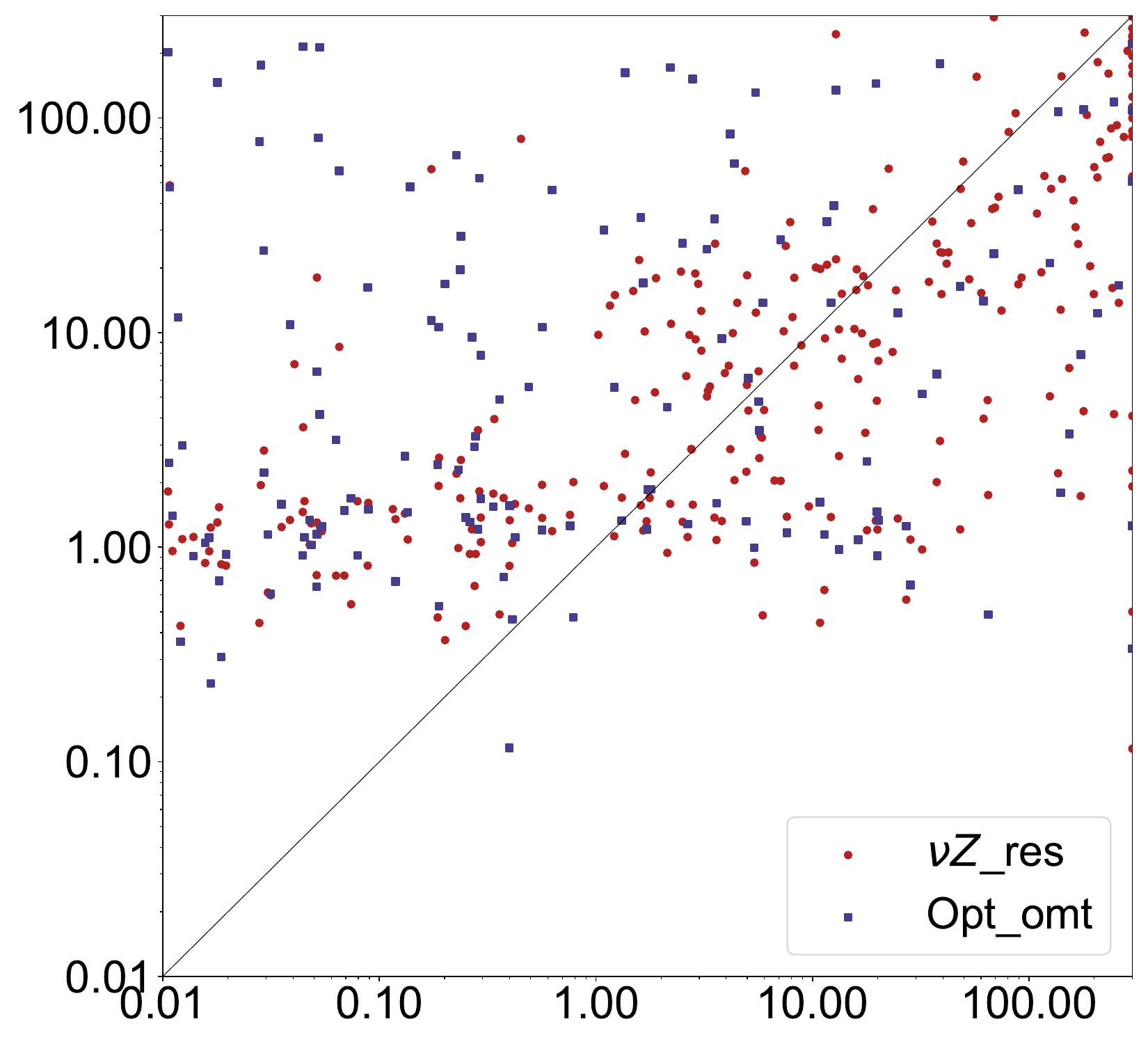}
        }
    \end{minipage}

    \caption{Run time comparison on MaxSMT-IDL}
    \label{compare_idl_fig}
\end{figure}



\section{Experiment of Evolution of Solution Quality}
\begin{table}[H]
\centering
\caption{The evolution of solution quality with different cutoff times in MaxSMT-LIA benchmark (LIA for short) and MaxSMT-IDL benchmark (IDL for short).}
\label{timelia_table}
\renewcommand\arraystretch{1.1}
\setlength{\tabcolsep}{1.0mm}\scalebox{1.0}{
{
\begin{tabular}{@{}llllllll@{}}
\toprule
\multirow{2}*{} & \multirow{2}*{{\it cutoff}} & \multirow{2}*{\#inst} 
&\ \ Opt\_omt&\ \ Opt\_res&
$\ \ \ \nu Z$\_res& $\ \ \nu Z$\_wmax 
&\ \ \  PairLS \\
&
&
&$\#win$($time$)
&$\#win$($time$)
&$\#win$($time$)
&$\#win$($time$)
&$\#win$($time$)\\ \midrule
\multirow{4}{*}{\it LIA} 
&50 &  5520 & 2869(10.6) & 3362(10.3) & 2906(8.2)& 1803(10.2) & \textbf{4531}(17.8)  \\ 
&100 & 5520 &3102(20.0)  &3712(10.5) &  2980(14.0) & 1802(18.9) & \textbf{4717}(25.3) \\ 
&200& 5520 & 3237(32.1) & 3914(10.2)  & 3130(19.6) & 1805(20.1) & \textbf{4763}(27.6) \\ 
&300& 5520 & 3267(34.3) & 3937(10.9)  &3168(26.1)  &1837(18.7)  & \textbf{4778}(30.7) \\ \midrule
\multirow{4}{*}{\it IDL} 
&50&  12888 &5239(22.2)  & 4589(20.9)  & 6373(24.7)& 4322(18.7) & \textbf{7175}(35.9)  \\ 
&100 & 12888 & 5220(42.1)  &4489(39.6) &  6369(47.2) & 4075(34.8) & \textbf{7014}(70.1) \\ 
&200& 12888 & 5091(138.7) &  4248(143.1)  &  6063(130.1)&  3881(134.3) &    \textbf{6968}(131.4) \\ 
&300& 12888 & 4412(210.5)  &  3577(190.2)  &   5673(205.8)&   3253(264.5) &   \textbf{6897}(153.3) \\ \bottomrule

\end{tabular}
}
}
\end{table}

\section{Experiment of Extended Neighborhood}
We conduct our experiments on MaxSMT-LIA. The results are in Table \ref{tuple_table}. \#Avergage Steps represents the number of iterations of the algorithm. 
\begin{table}[]
    \centering
    \caption{Extend the number of compensated variables on MaxSMT-LIA.
}
    \label{tuple_table}
    \renewcommand\arraystretch{1.0}
\setlength{\tabcolsep}{7mm}\scalebox{1.0}{
        \begin{tabular}{@{}llll@{}}
            \toprule

            & $v_{no\_pair}$  & PairLS & $v_{tuple}$ \\ \midrule
            \#win & 3686 & {\bf 5455} & 1029  \\
            \#Average Steps & {\bf 1952354} & 356752& 35850 \\ \bottomrule
        \end{tabular}
    }
\end{table}

\section{Experiment of SMT-LIB}

{\bf Competitors: }
We compare LS-LIA\_Pair with LS-LIA and  4 state-of-the-art complete SMT solvers according to  SMT-COMP 2023\footnote{https://smt-comp.github.io/2023}, namely
MathSAT5(version 5.6.6),
CVC5(version 0.0.4),
Yices2(version 2.6.2),
and Z3(version 4.8.14).
The binaries of all competitors are downloaded from their websites.

{\bf Benchmarks: } 
This benchmark consists of SMT(LIA) instances from SMT-LIB\footnote{https://clc-gitlab.cs.uiowa.edu:2443/SMT-LIB-benchmarks/QF\_LIA}. UNSAT instances are excluded, resulting in a benchmark consisting of 6670 unknown and satisfiable instances. 
Cutoff time is set as 1200 seconds.
\begin{table}[h]
\centering
\caption{Results on instances from SMTLIB.
}
\label{tbl:lia}
\renewcommand\arraystretch{1.0}
\setlength{\tabcolsep}{1.3mm}\scalebox{0.85}{
\begin{tabular}{@{}lllllllll@{}}
\toprule
Family                                                                      & Type                  & \#inst & MathSAT5     & CVC5         & Yices2       & Z3           & LS-LIA     & LS-LIA-Pair   \\ \midrule
\multirow{14}{*}{\begin{tabular}[c]{@{}l@{}}Without\\ Boolean\end{tabular}} & 20180326-Bromberger   & 631    & 123 & 206          & 273          & 99          & \textbf{50}         & \textbf{50}     \\
                                                                            & bofill-scheduling     & 407    & \textbf{0} & 5         & \textbf{0} & 2          & 16        & 10(-6)   \\
                                                                            & CAV\_2009\_benchmarks & 506    & \textbf{0} & 8         & 110          & \textbf{0} & \textbf{0} & \textbf{0}  \\
                                                                            & check                 & 0      & 0            & 0            & 0            & 0            & 0           & 0  \\
                                                                            & convert               & 280    & 7          & 75         & 94          & 96          & \textbf{1}  & \textbf{1} \\
                                                                            & dillig                & 230    & \textbf{0} & \textbf{0} & 30          & \textbf{0} & \textbf{0}  & \textbf{0}\\
                                                                            & miplib2003            & 16     & 6           & 7            & 5          & 8            & \textbf{3}  & \textbf{3} \\
                                                                            & pb2010                & 41     & 27           & 36            & 20           & 8  & 13           & \textbf{6(-7)}   \\
                                                                            & prime-cone            & 19     & 0           & 0           & 0           & 0           & 0           & 0 \\
                                                                            & RWS                   & 20     & 9          & 7         & 9          & \textbf{6}  & 8          & 8  \\
                                                                            & slacks                & 231    & 1         & \textbf{0} & 70          & 1         & \textbf{0} & \textbf{0}  \\
                                                                            & wisa                  & 3      & 0           & 0           & 0            & 0            & 0            & 0  \\
                                                                            & SMPT(2022)            & 4285   & 85      &      84        &   \textbf{65}           &    \textbf{65}          &  101           &  94 (-7)           \\
                                                                            & Total                 & 6670   & 228        & 428        & 676        & 285        & 192 & \textbf{172(-20)}\\ \midrule
\multirow{14}{*}{\begin{tabular}[c]{@{}l@{}}With \\ Boolean\end{tabular}}   & 2019-cmodelsdiff      & 144    & 50           & \textbf{49}  & \textbf{49}  & \textbf{49}  & 93        & 76 (-17)    \\
                                                                            & 2019-ezsmt            & 108    & \textbf{24}  & 29           & 27           & 27           & 54       & 51 (-3)  \\
                                                                            & 20210219-Dartagnan    & 47     & 25          & 25           & \textbf{24}  & \textbf{24}  & 45        & 45     \\
                                                                            & arctic-matrix         & 100    & 57           & 74           & 41           & 53           & \textbf{23}  & \textbf{23} \\
                                                                            & Averest               & 9      & \textbf{0}   & \textbf{0}   & \textbf{0}   & \textbf{0}   & 2          & 2   \\
                                                                            & calypto               & 24      & \textbf{0}          & \textbf{0}           & \textbf{0}           & \textbf{0}           & 21         & 21    \\
                                                                            & CIRC                  & 18     & \textbf{0}  & \textbf{0}  & \textbf{0}  & \textbf{0}  & 15          & 15    \\
                                                                            & fft    & 5      & 2           & 2            & 2            & 2            & 2         & 2   \\
                                                                            & mathsat               & 21     & \textbf{0}  & \textbf{0}  & \textbf{0}  & \textbf{0}  & 8        & 8     \\
                                                                            & nec-smt               & 1256     & 12       &  831         & \textbf{0}&  14         & 675     & 675       \\
                                                                            & RTCL                  & 2      & 0            & 0           & 0            & 0            & 0       & 0        \\
                                                                            & tropical-matrix       & 108    & 53          & 66          & 37           & 56           & \textbf{10} & \textbf{10}  \\
                                                                            & Total                 & 1842    & 223         & 766          & \textbf{180} & 225          & 930      & 910(-20)      \\ \bottomrule
\end{tabular}
}
\end{table}

\end{document}